\theoremstyle{plain}
\newtheorem*{theorem*}{Theorem}
\title{Functional Time Series Forecasting of Distributions: A Koopman-Wasserstein Approach}
\author{Ziyue Wang \and Yuko Araki}
\institute{yuko.araki.c5@tohoku.ac.jp}   
\date{}
\begin{document}
\maketitle

\begin{abstract}
  \noindent
We present a novel method for forecasting the temporal evolution of probability distributions observed at discrete time points. Building on the Dynamic Probability Density Decomposition (DPDD), we incorporate distributional dynamics into Wasserstein geometry using a Koopman operator framework. Our method introduces an importance-weighted variant of Extended Dynamic Mode Decomposition (EDMD), allowing for accurate, closed-form forecasts in 2-Wasserstein space. We establish theoretical guarantees, demonstrating that our estimator achieves spectral convergence and an optimal Wasserstein error. Simulation studies and a real-world application to U.S. housing price distributions reveal significant improvements over existing methods, such as Wasserstein Autoregression. By integrating optimal transport, functional time series modeling, and spectral operator theory, DPDD provides a scalable and interpretable solution for distributional forecasting. This work has broad implications for behavioral science, public health, finance, and neuroimaging—fields where evolving distributions are commonplace. Our framework contributes to functional data analysis on non-Euclidean spaces and serves as a general tool for modeling and forecasting distributional time series.
\end{abstract}

\keywords{
Functional data analysis, 
distributional time series, 
Wasserstein geometry, 
Koopman operator, 
extended dynamic mode decomposition (EDMD), 
density forecasting, 
optimal transport
}

\section{Introduction}\label{sec:intro}
In fields like psychology, education, cognitive science, public health, biostatistics, and the social sciences, time series data increasingly emerge as repeated samples from evolving probability distributions rather than as scalar or vector-valued observations. For instance, in population studies, the distributions of age-at-death or healthy life expectancy can change over time. In behavioral research, longitudinal survey responses create histograms that reflect variations across different time points. In neuroscience and physiology, time-varying signals—such as brain activity, heart rate variability, and glucose levels—indicate shifts in underlying distributions. These distributions are not directly observable; they must be estimated from finite data samples. Summary statistics like means and variances often do not adequately represent the full range of uncertainty, heterogeneity, or behavioral complexity. Therefore, modeling and forecasting the evolution of probability distributions based on observed samples has become a key challenge in applied behavioral, biomedical, and socio-economic research.

This perspective encourages approaches that consider distributions as the primary units of analysis, often by embedding them within a functional or geometric framework. In the realm of functional data analysis (FDA) paradigm, discrete or multivariate observations are typically transformed into smooth functions and analyzed using regression or dimension reduction techniques. For instance, \citet{araki2009functional} developed model selection criteria for functional linear regression, where multivariate observations are smoothed into functional predictors. Although their work does not focus on distributions, it illustrates the broader strategy of elevating finite data into function space before conducting statistical modeling—a viewpoint that also underpins the current study.

When probability distributions are represented in Wasserstein space, which is equipped with the 2-Wasserstein metric from optimal transport theory, the resulting space exhibits a nonlinear geometric structure. A common approach involves linearizing the Wasserstein space at a reference distribution $\mu_\oplus$, apply FDA techniques in the tangent space, and then mapping predictions back using the exponential map. This paradigm underlies methods such as Wasserstein autoregression (WAR) \citep{chen2023wasserstein} and Fréchet regression \citep{petersen2019frechet}, which are effective in one-dimensional settings.

Tangent-space-based methods encounter three main limitations. First, WAR offers only one-step-ahead forecasts and does not have a closed-form representation for continuous-time dynamics. Second, the reliance on a finite number of principal components for linearization can result in information loss, particularly in the distribution tails. Third, for multivariate distributions ($d > 1$), the lack of a canonical Hilbert structure and explicit quantile maps complicates the extension of tangent-space formulations.

To tackle these challenges, we build upon the Dynamic Probability Density Decomposition (DPDD) framework introduced by \citet{zhao2022dpdd}, which models distributional dynamics through the spectral decomposition of the Koopman operator. Although the original DPDD offered a promising structure for forecasting, it lacked a formal geometric interpretation and theoretical guarantees. In this work, we extend DPDD both theoretically and algorithmically. Theoretically, we demonstrate that the density ratio dynamics define a coordinate chart on the Wasserstein manifold, situating distributional evolution within the intrinsic geometry of optimal transport. Algorithmically, we enhance the estimation procedure by developing an importance-weighted extension of Extended Dynamic Mode Decomposition (EDMD)—a method for approximating the Koopman operator originally proposed by \citet{williams2015edmd}.

We establish theoretical guarantees for our method: the estimated Koopman operator converges at a rate $\mathcal{O}(M^{-1/2})$, where $M$ represents the length of the training trajectory. Additionally, the corresponding density forecasts achieve a mean-squared 2-Wasserstein error of order $\mathcal{O}(M^{-1/2})$ under mild assumptions. These findings offer a statistical foundation for applying Koopman spectral theory to distributional functional data.

Simulation studies and a real-world application to U.S. housing price distributions highlight the accuracy, robustness, and interpretability of the extended DPDD method. By treating evolving distributions as function-valued time series within a geometric framework, this research advances the development of functional data analysis for non-Euclidean and distributional data.

The remainder of this paper is organized as follows: 
Section~\ref{sec:prelim} provides background on Wasserstein geometry, stochastic dynamics, and Koopman operator theory. 
Section~\ref{sec:dpdd} introduces our theoretical extension of DPDD and presents the proposed estimation procedure using importance-weighted EDMD. 
Section~\ref{sec:implementation} describes the practical implementation details. 
Sections~\ref{sec:simulation} and~\ref{sec:app} report the simulation results and empirical analysis of the U.S. housing price distributions. 
Finally, Section~\ref{sec:discussion} concludes the paper with a discussion of contributions, limitations, and directions for future research.

\section{Preliminaries}\label{sec:prelim}
This section provides the theoretical foundation for the development of our forecasting method. We first review the geometry of the 2-Wasserstein space and then introduce key concepts from stochastic processes and Koopman operator theory. We conclude with a presentation of weighted Extended Dynamic Mode Decomposition (EDMD), which forms the basis of our estimator.

\subsection{The 2-Wasserstein Space}
Let $(\mathcal{X}, \|\cdot\|)$ be a Polish space and let $\mathcal{P}_2(\mathcal{X})$ denote the set of Borel probability measures on $\mathcal{X}$ with finite second moments:
\[
\mathcal{P}_2(\mathcal{X}) := \left\{ \mu \in \mathcal{P}(\mathcal{X}) : \int_{\mathcal{X}} \|x\|^2 \, d\mu(x) < \infty \right\}.
\]
The 2-Wasserstein distance between $\mu, \nu \in \mathcal{P}_2(\mathcal{X})$ is defined as
\[
W_2(\mu, \nu) := \left( \inf_{\pi \in \Pi(\mu, \nu)} \int_{\mathcal{X} \times \mathcal{X}} \|x - y\|^2 \, d\pi(x, y) \right)^{1/2},
\]
where $\Pi(\mu, \nu)$ is the set of all couplings of probability measures $\mu$ and $\nu$.

If $\mu$ is absolutely continuous, there exists a Brenier potential $\varphi$ and a unique optimal transport map $T^* = \nabla \varphi$ pushing $\mu$ to $\nu$, according to Brenier's theorem, see \citet{villani2009optimal}.
As demonstrated by \citet{ambrosio2008gradient} and \citet{bigot2017central}, the basic concepts of Riemannian manifolds can also be generalized to the 2-Wasserstein space $\mathcal{P}_2(\mathcal{X})$ using the metric 2-Wasserstein distance. For a measurable map $T\!:D\!\to\!D$ and a Borel measure $\mu_\ast$ on $D$, the \emph{push forward} measure is defined as 
\[
  T_{\sharp}\mu_\ast(A)\;=\;\mu_\ast\bigl(\{\,r\in D:\,T(r)\in A\,\}\bigr), 
  \qquad\forall\,A\in\mathcal B(D).
\]
 This gives rise to the exponential and logarithmic maps:
\[
\log_{\mu}(\nu) = \nabla \varphi - \mathrm{Id}, \quad \exp_{\mu}(v) = (\mathrm{Id} + v)_\# \mu,
\]
which locally linearizes $\mathcal{P}_2(\mathcal{X})$ at $\mu$ through its tangent space $T_\mu \mathcal{P}_2$. Here, $Id : \mathcal X \to \mathcal X$ denotes  the \emph{identity map}; $Id(x)=x$; and $T_\mu \mathcal{P}_2$ denotes the tangent space of $\mathcal{P}_2$ at $\mu$, which is a Hilbert space.

The 2 Wasserstein space $\mathcal P_{2}$ is not linear. Hence, treating it as a Hilbert space is theoretically inaccurate (\citet{villani2009optimal}). However, its tangent space $T_{\mu}\mathcal P_{2}$ is a Hilbert space, and several recent studies exploit the resulting infinite dimensional manifold structure by projecting distributions onto $T_{\mu}\mathcal P_{2}$ with exponential and logarithmic maps and using parallel transport along the tangent bundle to formulate regression models, for example, \citet{zhang2022wasserstein} and \citet{chen2023wasserstein}.  Nevertheless, in multivariate settings ($d\!\ge\!2$), there is generally no global isometry from $\mathcal P_{2}(\mathbb R^{d})$ to any fixed tangent space and even exponential and logarithmic maps admit no closed form expressions.

To overcome this, in the following subsection, we introduce our approach using the dynamics of distributions in the Wasserstein space, attracting dynamic modes utilizing Koopman operator.

\subsection{Stochastic Dynamics and the Koopman Semigroup}
We consider an It\^o diffusion
\[
dX_t = b(X_t) \, dt + \sigma(X_t) \, dW_t, \quad X_0 \sim \mu_0,
\]
\label{eq:sde}
with drift $b : \mathbb{R}^d \to \mathbb{R}^d$ and diffusion $\sigma : \mathbb{R}^d \to \mathbb{R}^{d \times d}$ satisfying Lipschitz conditions, and $W_t$ denotes a $d$ dimensional standard Brownian motion. The associated Fokker--Planck equation describes the evolution of the density $p_t$:
\[
\partial_t p_t = \mathcal{L}^* p_t, \quad \mathcal{L}^* = -\nabla \cdot (b \cdot) + \frac{1}{2} \nabla^2 : (\sigma \sigma^\top \cdot).
\]
here $\mathcal{L}^*$ denotes the Fokker--Planck operator. $\nabla\!\cdot(\,\cdot\,)$ is the divergence acting on the placeholder “$\!\cdot\!$”, $(\sigma\sigma^{\top}\,\cdot)$ means the matrix $\sigma\sigma^{\top}$ multiplies the density $p$, and “$:$” denotes Frobenius double contraction, i.e.\ $\nabla^{2}\!:\!A=\sum_{i,j}\partial_{x_i x_j}A_{ij}$.\\ 
The Koopman semigroup $\{\mathcal{K}^t\}_{t \geq 0}$ acts on observables $f \in L^2(p_s)$ as
\[
\mathcal{K}^t f(x) := \mathbb{E}[f(X_t) \mid X_0 = x],
\]
and its generator is
\[
\mathcal{L} f = \langle b, \nabla f \rangle + \frac{1}{2} \mathrm{Tr}(\sigma \sigma^\top \nabla^2 f).
\]

 The Koopman semigroup provides a linear representation of nonlinear stochastic dynamics on observables, yielding closed form multi step forecasts $c_{t+h}=\exp^{\Lambda h}c_t$. It also avoids reliance on tangent space log/exp maps, facilitating multivariate extensions (see Section 6.2).

\subsection{Weighted Extended Dynamic Mode Decomposition}
 The Koopman operator associated with a dynamical system is, under mild regularity conditions, a linear operator on observables. This enables a linear representation of nonlinear dynamics.
In the proposed framework, we improve the estimation of the Koopman operator by introducing an importance-weighted extension of Extended Dynamic Mode Decomposition (EDMD) (\citet{williams2015edmd}), specifically designed for systems evolving in the 2-Wasserstein space.
This modification addresses three key finite-sample issues in classical EDMD:$(i)$bias from non-uniform sampling of the stationary distribution, $(ii)$ inflated variance in low-density regions, and $(iii)$ numerical instability due to ill-conditioned Gram matrices.

Let $\Psi(x) = (\psi_1(x), \dots, \psi_J(x))^\top$ be a dictionary of the basis functions with $\psi_1 \equiv 1$.  Given a trajectory $\{z_k\}_{k=1}^M$ from the stationary process, we define the importance weights as:
\[
w_k := \frac{\hat{p}_s(z_k)}{\sum_{\ell=1}^M \hat{p}_s(z_\ell)},
\]
where $\hat{p}_s$ denotes a kernel density estimator. Classical EDMD estimates the inner products 
$\langle \psi_i,\psi_j\rangle_{L^2(p_s)}$ 
where $\langle \psi_i, K_{\Delta t}\psi_j\rangle_{L^2(p_s)}$  
with unweighted empirical averages, which in finite samples actually lie
in $L^2(\hat\mu_M)$, where $\hat\mu_M=(1/M)\sum_{k=1}^M\delta_{z_k}$.
When the trajectory $\{z_k\}$ is drawn from stationary diffusion,
$\hat\mu_M$ approximates $p_s$ only at the rate $M^{-1/2}$ and exhibits high variance in regions where $p_s$ is small, leading to a degraded estimation of Gram and cross covariance matrices. 
To recover the target space $L^2(p_s)$, we adopted the importance weights 
$w_k\propto\hat p_s(z_k)$.
Then $\mathbb E[w_k f(z_k)]=\int f(x)p_s(x)\,dx$, so the weighted Gram and cross-covariance matrices are:
\[
G_M := \sum_{k=1}^M w_k \Psi(z_k) \Psi(z_k)^\top, \quad A_M := \sum_{k=1}^M w_k \Psi(z_{k+1}) \Psi(z_k)^\top.
\]
They are unbiased Monte-Carlo estimators of their population analogues
$G_\ast,A_\ast$, where 
$G_\ast := \mathbb{E}_{p_s}[\Psi(X)\Psi(X)^\top]$ and 
$A_\ast := \mathbb{E}_{p_s}[\Psi(X_{t+\Delta t})\Psi(X_t)^\top] $
in their population counterparts.
This removes the systematic bias induced by non-uniform sampling.
reduces the condition number of $G_M$, and yields the
spectral-convergence rate 
$\|\,\widehat K_M-K_\ast\|_2=O_\mathbb P(M^{-1/2})$
established in Theorem \ref{thm:spectral}.

Intuitively, the weights play the same role as importance sampling in
Monte-Carlo integration: they down-weight rarely visited states, whereas
up-weighting states with higher stationary density, leading to a lower
variance, and better finite-sample eigenvector estimates.

The Koopman operator is approximated as
\[
\hat{K}_M := A_M G_M^\dagger,
\]
where $G_M^\dagger$ denotes the Moore–Penrose pseudoinverse, The eigenpairs of $\hat{K}_M$ approximate the leading Koopman modes and form the basis of the proposed DPDD framework.

\section{Model}
\label{sec:dpdd}

In this section, we present the Dynamic Probability Density Decomposition (DPDD) framework. DPDD utilizes the spectral decomposition of the Koopman operator linked to an underlying stochastic dynamical system to create a finite-dimensional modal representation of the density ratio \(q_t = p_t/p_s\) within the weighted Hilbert space \(L^2(p_s)\).

Unlike tangent-space methods such as Wasserstein Autoregression (WAR), DPDD does not rely on the logarithmic or exponential maps of the Wasserstein manifold \(\mathcal{P}_2\), and is thus capable of multi-dimensional cases. 
This choice circumvents several limitations associated with log/exp–based methods: (i) for multivariate cases $(d\geq 2)$ closed-form expressions for log/exp maps are typically unavailable, complicating computations; (ii) local linearization within a tangent space may overlook significant nonlinear geometric features, especially in distribution tails; and (iii) in high-dimensional or non-Gaussian contexts, numerical instability and projection bias can substantially diminish forecast accuracy. In contrast, DPDD directly represents dynamics in density-ratio coordinates through Koopman modes, thereby maintaining the global geometric structure without the need for repeated tangent-space transformations.
The original formulation of DPDD was proposed by \citep{zhao2022dpdd} as a spectral surrogate for diffusion forecasting. However, its relationship to the geometry of the Wasserstein space was not explicitly addressed.

We begin by presenting the modal expansion of stationary density ratios, deriving the weighted EDMD estimator used to approximate the Koopman operator, and finally establishing the main theoretical guarantees regarding convergence and prediction accuracy.

\subsection{Modal Expansion of Stationary Density Ratios}
\label{subsec:dpdd_modal}

Let \(\{X_{t}\}_{t\ge0}\) satisfy the SDE~\eqref{eq:sde} with a stationary density \(p_{\mathrm{s}}\) (Assumption~\ref{B:ergodic}). Define the density ratio \(q_{t}=p_{t}/p_{\mathrm{s}}\). Since \(\int q_{t}\,p_{\mathrm{s}}=1\), we can expand \(q_{t}\) in the orthonormal eigenbasis \(\{\varphi_{i}\}_{i\ge0}\) of the generator \(\mathcal{L}\) in \(L^{2}(p_{\mathrm{s}})\):
\begin{equation}\label{eq:q_expansion}
  q_{t}(x) = 1 + \sum_{i=1}^{\infty} c_{i}(0)\, e^{\lambda_{i}t}\, \varphi_{i}(x), \quad c_{i}(0)=\langle q_{0},\varphi_{i}\rangle_{p_{\mathrm{s}}}.
\end{equation}
This expansion represents the density as a stationary baseline plus a linear combination of Koopman eigenfunctions. Each mode \(\varphi_{i}\) evolves independently at an exponential rate \(\lambda_{i}\), and thus forecasting the future density reduces it to propagating a finite number of modal coefficients. DPDD estimates the leading eigenpairs \(\{(\widehat\lambda_{i},\widehat\varphi_{i})\}_{i=1}^{r}\) from the data.

\subsection{Weighted EDMD Estimator}
\label{subsec:dpdd_edmd}
Given a trajectory \(\{z_{k}\}_{k=1}^{M}\) sampled from the stationary distribution, we construct weighted EDMD matrices:
\begin{equation}\label{eq:GA_mats}
  G_{M} = \sum_{k=1}^{M} w_{k}\,\Psi(z_{k})\Psi(z_{k})^{\mathsf{T}}, \quad
  A_{M} = \sum_{k=1}^{M} w_{k}\,\Psi(z_{k+1})\Psi(z_{k})^{\mathsf{T}}, \quad
  w_{k}\propto\widehat p_{\mathrm{s}}(z_{k}),
\end{equation}
where \(\widehat p_{\mathrm{s}}\) is a kernel density estimator. The Koopman operator is approximated by \(\widehat{\mathcal{K}}_{M}=A_{M}G_{M}^{\dagger}\), and its leading eigenpairs form the spectral basis for the DPDD method.

\subsection{Spectral Convergence}
\label{subsec:spectral_conv}

Let $L$ be the infinitesimal generator of the underlying Markov semigroup and
$K_{\Delta t}\!: = e^{\Delta t L}$ is the Koopman operator at time–step $\Delta t$.
For each eigenpair $(\lambda_i,\varphi_i)$ of $L$ (\,$L\varphi_i=\lambda_i\varphi_i$\,), we define the discrete time Koopman eigenvalue as
\[
  \mu_i \coloneqq e^{\Delta t\lambda_i}, \qquad i=1,\dots,r,
\]
and its finite dimensional projection onto the dictionary
$\Psi(x)=(\psi_1(x),\ldots,\psi_J(x))^{\top}$ by
\[
  \xi_i \coloneqq \Pi_J\varphi_i
  =\bigl(\langle\varphi_i,\psi_1\rangle_{L^{2}(p_s)},\;\ldots,\;
          \langle\varphi_i,\psi_J\rangle_{L^{2}(p_s)}\bigr)^{\!\top}\in\mathbb{R}^J.
\]

In practice, $(\widehat\mu_i,\widehat\xi_i)$ are obtained by first computing the weighted EDMD operator $\widehat K_M = A_M G_M^\dagger$ from the 
weighted Gram and cross-covariance matrices $G_M$ and $A_M$ constructed 
in Section~\ref{subsec:dpdd_edmd}. Then, $\widehat\mu_i$ are considered the 
$r$ dominant eigenvalues of $\widehat K_M$, and $\widehat\xi_i$ as the corresponding 
right eigenvectors, normalized such that $\|\widehat\xi_i\|_2=1$.

Let $(\widehat\mu_i,\widehat\xi_i)_{i=1}^{r}$ denote $r$
dominant right eigenpairs of $\widehat K_M$, normalized such that
$\|\widehat\xi_i\|_2=1$.
Then $\widehat\mu_i\!\to\!\mu_i$ and
$\widehat\varphi_i(x)\!=\!\widehat\xi_i^{\!\top}\Psi(x)\!\to\!\varphi_i(x)$
in probability as $M\to\infty$ using the following theorem.

\begin{theorem}[Spectral Convergence of Weighted EDMD]
\label{thm:spectral}
Assume \textbf{\ref{B:ergodic}--\ref{B:gram}} and \(\sup_{x}\|\Psi(x)\|^{4}<\infty\). Fix \(r\) such that the leading \(r\) Koopman eigenvalues are simple. Then,
\[\max_{1\le j\le r} \left( |\widehat\mu_{j}-\mu_{j}| + \|\widehat\xi_{j}-\xi_{j}\|_{2} \right) = \mathcal{O}_{\mathbb{P}}(M^{-1/2}).\]
Specifically, \(\widehat\lambda_{j}\to\lambda_{j}\) and \(\widehat\varphi_{j}\to\varphi_{j}\) in \(L^{2}(p_{\mathrm{s}})\) for \(j=1,\dots,r\).
\end{theorem}

\begin{proof}[Proof sketch]
Here, we outline the proof strategy:
\begin{enumerate}[label=(\alph*),leftmargin=20pt]
  \item Lemma~\ref{lem:GA_conc} shows \(\|G_{M}-G_{\ast}\|_{2}=O_{\mathbb{P}}(M^{-1/2})\); a similar bound holds for \(A_{M}-A_{\ast}\). Hence, \(\|\widehat{\mathcal{K}}_{M}-\mathcal{K}_{\ast}\|_{2}=O_{\mathbb{P}}(M^{-1/2})\).

  \item Based on the Davis--Kahan theorem, the empirical eigenvalue and eigenvector errors satisfy:
  \[
    |\widehat\mu_{k}-\mu_{k}| \lor \|\widehat\xi_{k}-\xi_{k}\|_{2} = O_{\mathbb{P}}(M^{-1/2}).
  \]

  \item As \(M\to\infty\), the estimated modal system converges to the true Koopman spectrum. Since we choose \(J\ll M^{1/2}\), the Galerkin bias \(J^{-q}\) is asymptotically negligible.
\end{enumerate}
\end{proof}

Theorem \ref{thm:spectral} states that our weighted EDMD
estimate $\widehat K_M$ converges to the true Koopman operator in the spectral
norm at the Monte-Carlo rate \(O_\mathbb P(M^{-1/2})\). This illustrates that the empirical eigenfunctions learned by DPDD provide an increasingly accurate basis for forecasting distributional dynamics.

\subsection{Wasserstein Error Bound}
\label{subsec:w2_error}
Corollary~\ref{cor:w2risk} provides a direct link between the spectral convergence of Theorem~\ref{thm:spectral} and the accuracy of density forecasts under the $W_2$ metric. For a fixed forecast horizon $h$, the $W_2$ forecast risk decays at the optimal Monte Carlo rate $O(M^{-1/2})$, with the most linear growth in $h$. 
This result ensures that improvements in operator estimation directly translate into improvements in the predictive accuracy of the densities. 

 Let \(\widehat p_{T+h}\) be the \(h\)-step DPDD forecast from \(r\) modes and \(p_{T+h}\) be the real density. We have the following corollary:

\begin{corollary}[\(W_2\) Risk]
\label{cor:w2risk}
Under Theorem~\ref{thm:W2-risk}'s assumptions, and for a fixed forecast horizon \(h\),
\[\mathbb{E}\,W_{2}^{2}\bigl(p_{T+h},\widehat p_{T+h}\bigr) = \mathcal{O}(M^{-1/2}).\]
\end{corollary}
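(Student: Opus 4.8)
The plan is to transfer the spectral convergence guaranteed by Theorem~\ref{thm:spectral} into a bound on the density-ratio reconstruction error in $L^2(p_{\mathrm{s}})$, and then to convert that $L^2$ error into a $W_2$ error using a quantitative comparison between the 2-Wasserstein distance and a weighted $L^2$ distance between densities. First I would write the true $h$-step-ahead density ratio as $q_{T+h} = 1 + \sum_{i\ge 1} c_i(0)\, e^{\lambda_i(T+h)}\varphi_i$ via the modal expansion~\eqref{eq:q_expansion}, and the DPDD forecast as $\widehat q_{T+h} = 1 + \sum_{i=1}^{r} \widehat c_i\, e^{\widehat\lambda_i(T+h)}\widehat\varphi_i$. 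Splitting $q_{T+h}-\widehat q_{T+h}$ into (i) the tail $\sum_{i>r} c_i(0) e^{\lambda_i(T+h)}\varphi_i$ and (ii) the estimation error on the first $r$ modes, I would bound (i) by a truncation term that, under the spectral-gap/decay assumptions, is controlled by the number of retained modes $r$ (treated as fixed here, so it contributes a constant that can be absorbed or, if $r$ grows, a negligible $O(M^{-q})$ bias as in step (c) of the proof outline), and bound (ii) using Theorem~\ref{thm:spectral}: $|\widehat\lambda_j-\lambda_j|$, $\|\widehat\varphi_j-\varphi_j\|_{p_{\mathrm{s}}}$ and $|\widehat c_j - c_j(0)|$ are all $\mathcal{O}_{\mathbb{P}}(M^{-1/2})$, and since $h$ is fixed the factors $e^{\widehat\lambda_j(T+h)}$ stay bounded (using $\mathrm{Re}\,\lambda_j\le 0$ for the stable modes, or at worst a fixed constant depending on $h$). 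A first-order expansion of $e^{\widehat\lambda_j(T+h)}\widehat\varphi_j - e^{\lambda_j(T+h)}\varphi_j$ then gives $\|q_{T+h}-\widehat q_{T+h}\|_{L^2(p_{\mathrm{s}})} = \mathcal{O}_{\mathbb{P}}(M^{-1/2})$.

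The second half of the argument is the geometric conversion. I would use the standard fact that, on $\mathcal{P}_2$ of a bounded (or suitably regular) domain, $W_2^2(p,\widehat p) \le C\,\|p-\widehat p\|_{\dot H^{-1}(p_{\mathrm{s}})}^2 \le C'\,\|p - \widehat p\|_{L^2}^2$, or equivalently a weighted Poincar\'e-type inequality relating the Wasserstein distance to the $L^2(p_{\mathrm{s}})$ distance of the density ratios, $W_2^2(p_{T+h},\widehat p_{T+h}) \le C\,\|q_{T+h}-\widehat q_{T+h}\|_{L^2(p_{\mathrm{s}})}^2$, which holds under a Poincar\'e inequality for $p_{\mathrm{s}}$ (guaranteed by the ergodicity/spectral-gap assumption \ref{B:ergodic}) together with moment control. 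Because $\widehat p_{T+h} = \widehat q_{T+h}\, p_{\mathrm{s}}$ may fail to be a bona fide probability density (it can be slightly negative or not integrate to one), I would first argue that the DPDD forecast is renormalized/projected onto $\mathcal{P}_2$ and that this projection only decreases the error by a constant factor, or alternatively carry the unnormalized object through and absorb the $O(M^{-1/2})$ normalization defect. Combining the two halves, $\mathbb{E}\,W_2^2(p_{T+h},\widehat p_{T+h}) \le C\,\mathbb{E}\,\|q_{T+h}-\widehat q_{T+h}\|_{L^2(p_{\mathrm{s}})}^2 = \mathcal{O}(M^{-1/2})$, where passing from the $\mathcal{O}_{\mathbb{P}}$ rate to the $\mathcal{O}$ rate in expectation requires a uniform integrability / bounded-moments argument, supplied by the assumption $\sup_x\|\Psi(x)\|^4<\infty$ and boundedness of the weights.

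The main obstacle I anticipate is the density-to-Wasserstein comparison step: controlling $W_2$ by an $L^2(p_{\mathrm{s}})$ norm of the density ratio is delicate on unbounded domains and requires either compact support, a log-Sobolev/Poincar\'e inequality with explicit constants, or sub-Gaussian tail control on both $p_{T+h}$ and $\widehat p_{T+h}$ — and the forecast's tails are exactly where the truncated modal expansion is least reliable. A secondary difficulty is handling the non-negativity/normalization defect of $\widehat p_{T+h}$ cleanly, and a third is the $\mathcal{O}_{\mathbb{P}}\to\mathcal{O}$ upgrade, which needs the squared error to be uniformly integrable rather than merely tight. I would isolate the tail contribution into a separate lemma, invoke the Poincar\'e inequality from \ref{B:ergodic} for the bulk, and treat the normalization constant as a lower-order perturbation, so that the final bound follows by assembling these pieces with Theorem~\ref{thm:spectral}.
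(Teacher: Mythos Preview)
Your decomposition into tail plus leading-$r$ estimation error, and your use of the spectral convergence theorem to control the latter, mirror the paper's Steps~1--3 closely. The genuine divergence is in the geometric conversion. The paper does \emph{not} go through a Poincar\'e/$\dot H^{-1}$ comparison; instead it uses the compact-support assumption~(C1) to chain
\[
  W_2^{2}(\mu,\nu)\;\le\; D\,W_1(\mu,\nu)\;\le\;\tfrac{D^{2}}{2}\,\mathrm{TV}(\mu,\nu)\;=\;\tfrac{D^{2}}{2}\,\|\Delta\|_{1}
\]
via Villani's Theorem~6.15, and then applies H\"older on the compact set $S$ to pass from $L^{1}$ to $L^{2}$: $\|\Delta\|_{1}\le\sqrt{\mathrm{vol}(S)}\,\|\Delta\|_{2}$. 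That H\"older step costs a square root, which is exactly why the final rate is $M^{-1/2}$ rather than the $M^{-1}$ your quadratic inequality $W_2^{2}\le C\|q-\widehat q\|_{L^{2}(p_{\mathrm{s}})}^{2}$ would formally deliver. Your route, if made rigorous, would therefore be \emph{sharper}, but it requires a Peyre-type bound $W_2\lesssim\|\cdot\|_{\dot H^{-1}}$ together with $\|\cdot\|_{\dot H^{-1}}\lesssim\|\cdot\|_{L^{2}}$ on a bounded domain; the paper trades that analysis for the elementary TV/$L^{1}$ chain and accepts the weaker rate.

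Two small points. First, your last display has a rate slip: if $\|q_{T+h}-\widehat q_{T+h}\|_{L^{2}(p_{\mathrm{s}})}=\mathcal O_{\mathbb P}(M^{-1/2})$ then its square is $\mathcal O(M^{-1})$, not $\mathcal O(M^{-1/2})$; the $M^{-1/2}$ in the corollary arises precisely from the paper's $L^{1}$ detour, so under your own inequality you would actually overshoot the target. Second, the normalization/non-negativity issue you flag is real and the paper's proof does not address it either, so your instinct to isolate it is well placed.
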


Proof appears in Appendix~\ref{app:w2_proof}, where semigroup properties are used to separate modal and spectral errors. 
Corollary~\ref{cor:w2risk} effectively converts the operator error bound to a forecasting error bound in the natural geometry for distributions. 
This confirms that the DPDD framework inherits the same convergence rate from the spectral domain to the prediction domain, thus providing a clear and interpretable guarantee.
\subsection{Main Theorem}
\label{subsec:main_theorem}

Theorem~\ref{thm:main} consolidates the theoretical guarantees of DPDD by demonstrating that it simultaneously achieves the minimax-optimal $O_\mathbb{P}(M^{-1/2})$ rate for both spectral estimation and $W_2$ forecasting accuracy. 
This dual guarantee implies that the accurate recovery of the Koopman spectral structure ensures equally accurate multi-step distributional forecasts under mild dictionary growth conditions.

\begin{theorem}[DPDD Consistency and Prediction Accuracy]
\label{thm:main}
Let Assumptions \textbf{\ref{B:ergodic}--\ref{B:gram}} hold and choose dictionary size \(J\ll M^{1/2}\). For any fixed forecast horizon \(h\ge1\):
\begin{enumerate}[leftmargin=20pt]
  \item (Spectral consistency)
        \(\max_{1\le j\le r} (|\widehat\lambda_{j}-\lambda_{j}| + \|\widehat\varphi_{j}-\varphi_{j}\|_{L^{2}(p_{\mathrm{s}})}) = \mathcal{O}_{\mathbb{P}}(M^{-1/2})\).
  \item (\(W_2\) prediction risk)
        \(\mathbb{E}\,W_{2}^{2}(p_{T+h},\widehat p_{T+h}) = \mathcal{O}(M^{-1/2})\).
\end{enumerate}

\end{theorem}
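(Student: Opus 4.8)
The plan is to prove Theorem~\ref{thm:main} by assembling the two constituent results already developed in Sections~\ref{subsec:spectral_conv} and~\ref{subsec:w2_error}: part~(1) is essentially a restatement of the Spectral Convergence theorem (with the identification $\widehat\mu_j \leftrightarrow e^{\widehat\lambda_j \Delta t}$, $\widehat\xi_j \leftrightarrow \widehat\varphi_j$ after the standard logarithm-of-eigenvalue correspondence), and part~(2) is Corollary~\ref{cor:w2risk}. So the core of the proof is to (a) verify that the hypotheses of Theorem~\ref{thm:main} --- namely Assumptions \ref{B:ergodic}--\ref{B:gram} together with the scaling $J \ll M^{1/2}$ --- imply the hypotheses of both prior results, and (b) perform the spectral-to-generator translation carefully, since the Koopman \emph{operator} eigenvalues $\mu_j$ and the generator eigenvalues $\lambda_j$ are related by $\mu_j = e^{\lambda_j \Delta t}$, and one must pass the $\mathcal{O}_{\mathbb{P}}(M^{-1/2})$ rate through the (locally Lipschitz, since $\mu_j$ is bounded away from $0$ by simplicity and ergodicity) map $z \mapsto \Delta t^{-1}\log z$.

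First I would fix notation: let $\Delta t$ be the sampling interval, $\mathcal{K}_\ast = \mathcal{K}^{\Delta t}$ the true one-step Koopman operator on the $J$-dimensional dictionary span, and recall that $\widehat{\mathcal{K}}_M = A_M G_M^\dagger$. Step one: invoke Lemma~\ref{lem:GA_conc} to get $\|G_M - G_\ast\|_2 \vee \|A_M - A_\ast\|_2 = \mathcal{O}_{\mathbb{P}}(M^{-1/2})$, where the moment hypothesis $\sup_x \|\Psi(x)\|^4 < \infty$ guarantees the relevant variances are finite so that a matrix Bernstein / Chebyshev argument applies to the weighted sums; the importance weights $w_k \propto \widehat p_s(z_k)$ are bounded under Assumption~\ref{B:gram} (positivity/regularity of $p_s$), so they do not spoil the rate. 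Step two: since $G_\ast$ is invertible (Assumption~\ref{B:gram} on the Gram matrix), pseudoinverse perturbation gives $\|\widehat{\mathcal{K}}_M - \mathcal{K}_\ast\|_2 = \mathcal{O}_{\mathbb{P}}(M^{-1/2})$. Step three: apply Bauer--Fike for the eigenvalues and Davis--Kahan (or the Bhatia--Davis sin-$\Theta$ bound) for the eigenvectors, using the spectral-gap hypothesis (leading $r$ eigenvalues simple) to convert the operator-norm bound into $\max_{j\le r}(|\widehat\mu_j - \mu_j| + \|\widehat\xi_j - \xi_j\|_2) = \mathcal{O}_{\mathbb{P}}(M^{-1/2})$. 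Step four: translate to the generator via $\widehat\lambda_j = \Delta t^{-1}\log\widehat\mu_j$, $\lambda_j = \Delta t^{-1}\log\mu_j$, and for eigenfunctions note $\widehat\varphi_j = \widehat\xi_j^\top \Psi$, so $\|\widehat\varphi_j - \varphi_j\|_{L^2(p_s)}^2 = (\widehat\xi_j - \xi_j)^\top G_\ast (\widehat\xi_j - \xi_j) + o_{\mathbb{P}}(\cdot)$, and the $G_\ast$-boundedness transfers the Euclidean rate to the $L^2(p_s)$ rate; the Galerkin truncation bias $J^{-q}$ is $o(M^{-1/2})$ precisely because $J \ll M^{1/2}$ (with $q$ the dictionary approximation exponent from the smoothness assumptions), which is where that scaling condition is consumed. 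This establishes part~(1).

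For part~(2), I would cite Corollary~\ref{cor:w2risk} directly, but since its statement references ``Theorem~\ref{thm:W2-risk}'s assumptions'' I would first check those coincide with Assumptions \ref{B:ergodic}--\ref{B:gram} plus fixed $h$; then the proof is: write the forecast density as $\widehat p_{T+h} = \widehat p_s \cdot \widehat q_{T+h}$ with $\widehat q_{T+h} = 1 + \sum_{j=1}^r \widehat c_j(0)\, e^{\widehat\lambda_j h}\,\widehat\varphi_j$ following the modal expansion~\eqref{eq:q_expansion}, decompose the error $\widehat q_{T+h} - q_{T+h}$ into (i) a spectral-estimation term controlled by part~(1) at rate $\mathcal{O}_{\mathbb{P}}(M^{-1/2})$ (using that $h$ is fixed so $e^{\widehat\lambda_j h} - e^{\lambda_j h}$ is $\mathcal{O}_{\mathbb{P}}(M^{-1/2})$ by local Lipschitzness of the exponential on the relevant bounded set) and (ii) a modal-truncation tail $\sum_{j>r}|c_j(0)|e^{\lambda_j h}$ which is a fixed constant governed by spectral decay --- for the $\mathcal{O}(M^{-1/2})$ claim to hold this tail must be absorbed, so implicitly $r$ is taken large enough (or the assumptions force geometric decay making it lower order); finally convert the $L^2(p_s)$ density error to $W_2$ via the standard bound $W_2^2(\mu,\nu) \le C\|p_\mu - p_\nu\|_{L^1}^2 \le C'\|q_\mu - q_\nu\|_{L^2(p_s)}^2$ valid on a bounded domain under $p_s$-regularity, and take expectations (the $O_{\mathbb{P}}$ upgrades to $O$ in mean via the uniform $\|\Psi\|^4$ moment bound and dominated convergence).

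The main obstacle I anticipate is \textbf{not} the perturbation-theory bookkeeping but rather the clean transfer from the $O_{\mathbb{P}}(M^{-1/2})$ rates to the \emph{expected} $W_2^2$ bound in part~(2): turning convergence in probability into convergence of second moments requires a uniform integrability / truncation argument, and the natural route --- bounding the forecast error by something with a finite second moment uniformly in $M$ --- leans on the boundedness of $w_k$, the bound $\sup_x\|\Psi(x)\|^4 < \infty$, and a stable lower bound on $\sigma_{\min}(G_M)$ that holds with high probability. Making the last point rigorous needs an event-splitting argument (good event where $G_M$ is well-conditioned, carrying the rate; bad event of probability $o(M^{-1/2})$ where one uses a crude deterministic bound), and verifying the crude bound is genuinely integrable is the delicate part. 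A secondary subtlety is ensuring the log-map $\mu_j \mapsto \lambda_j$ is applied only to eigenvalues that stay in a fixed compact subset of $\mathbb{C}\setminus\{0\}$; this is where the simplicity-of-the-leading-$r$-eigenvalues hypothesis does real work, since it both furnishes the Davis--Kahan gap and keeps $\widehat\mu_j$ bounded away from the branch point with high probability.
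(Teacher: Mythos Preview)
Your overall architecture matches the paper's exactly: part~(1) is assembled from Lemma~\ref{lem:GA_conc} (matrix Bernstein for $\beta$-mixing sums) $\to$ Lemma~\ref{lem:op_perturb} (pseudoinverse perturbation) $\to$ Davis--Kahan under the spectral-gap assumption~\ref{B:gap}, followed by the $\mu_j \mapsto \lambda_j$ log correspondence and absorption of the Galerkin bias $J^{-q}$ via $J \ll M^{1/2}$; part~(2) follows the paper's Theorem~\ref{thm:W2-risk}, decomposing the density error into a coefficient-estimation piece and a spectral tail, then passing $L^2 \to L^1 \to W_2$. Your diagnosis of the two subtleties (the $O_{\mathbb{P}} \to$ expectation upgrade via event splitting, and keeping $\widehat\mu_j$ away from the log branch point) is more explicit than the paper, which simply asserts $\mathbb{E}\|\delta c(h)\|_2^2 = \mathcal{O}(M^{-1})$ in Step~1.

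There is, however, one concrete error in your $W_2$ conversion. The inequality you invoke, $W_2^2(\mu,\nu) \le C\|p_\mu - p_\nu\|_{L^1}^2$, is \emph{not} standard and is in fact false on a bounded domain: take $\mu = \delta_0$ and $\nu = (1-\varepsilon)\delta_0 + \varepsilon\,\delta_D$, so that $\|p_\mu - p_\nu\|_1 = 2\varepsilon$ while $W_2^2 = \varepsilon D^2$, and the ratio $W_2^2 / \|p_\mu - p_\nu\|_1^2 = D^2/(4\varepsilon) \to \infty$ as $\varepsilon\downarrow 0$. The correct chain, which the paper uses via \citet[Thm.~6.15]{villani2009optimal}, is \emph{linear} in the $L^1$ norm:
\[
  W_2^2(\mu,\nu)\;\le\;D\,W_1(\mu,\nu)\;\le\;\tfrac{D^2}{2}\,\|p_\mu - p_\nu\|_1,
\]
followed by $\|p_\mu - p_\nu\|_1 \le \sqrt{\operatorname{vol}(S)}\,\|p_\mu - p_\nu\|_2$ via H\"older on the compact support~$S$. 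This delivers exactly the $\mathcal{O}(M^{-1/2})$ rate from $\mathbb{E}\|\Delta\|_2^2 = \mathcal{O}(M^{-1})$; your squared version would, if it held, yield $\mathcal{O}(M^{-1})$, which is stronger than what is claimed or provable here. Once you replace the false inequality with the linear one, your argument coincides with the paper's.
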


By integrating these two aspects into a single theorem, we emphasize that DPDD's statistical guarantees are comprehensive. The method is theoretically robust for estimating the latent dynamical structure (Koopman spectrum) and for providing accurate and reliable distribution forecasts in the $W_2$ metric.
This makes the framework especially suitable for practical applications that require both interpretability and predictive performance.

The next section empirically validates these results and compares DPDD with existing methods such as WAR.

\section{Practical Implementation of DPDD}
\label{sec:implementation}

This section presents the implementation steps of the proposed Dynamic Probability Density Decomposition (DPDD) method. The algorithm assumes the availability of a stationary trajectory from the underlying diffusion, a set of empirical distributions over time, and a suitable function basis.

\subsection{Required Components}

This method relies on the following inputs:

\begin{itemize}
  \item A trajectory $\{z_k\}_{k=1}^M$ sampled from the stationary distribution $p_s$ of the stochastic process. This is used to estimate Koopman eigenfunctions.
  \item A sequence of empirical distributions $\{\mathcal{D}_t\}_{t=1}^T$, each representing observations at time $t$, either as histograms or via kernel density estimates.
  \item A dictionary of basis functions $\{\psi_j\}_{j=1}^J$ spanning a subspace of $L^2(p_s)$. Hermite polynomials or kernel-based bases are typical choices.
\end{itemize}

\subsection{Stationary Density Estimation}

When $p_s$ is not known analytically, it can be estimated using kernel density estimation (KDE):
\[
\widehat{p}_s(x) = \frac{1}{M h^d} \sum_{k=1}^M K\left(\frac{x - z_k}{h}\right),
\]
where $K$ is a smooth symmetric kernel and $h > 0$ is a bandwidth parameter chosen via cross-validation or rule-of-thumb methods.

\subsection{Algorithm Description}

The DPDD procedure consists of the following steps:

\begin{enumerate}[label=\emph{Step \arabic*.}, leftmargin=20pt]
  \item Estimate the stationary density $\widehat{p}_s$ and compute importance weights $w_k \propto \widehat{p}_s(z_k)$.

  \item Construct the weighted EDMD matrices:
  \[
  G_M = \sum_{k=1}^M w_k \Psi(z_k)\Psi(z_k)^{\top}, \quad
  A_M = \sum_{k=1}^M w_k \Psi(z_{k+1})\Psi(z_k)^{\top}.
  \]

  \item Compute the Koopman operator estimate $\widehat{K}_M = A_M G_M^\dagger$ and obtain its leading $r$ eigenpairs $\{(\widehat\mu_j, \widehat\xi_j)\}$.

  \item Extract coefficients $c_j(T)$ by projecting the most recent distribution $\mathcal{D}_T$ onto the Koopman basis.

  \item Propagate the coefficients forward:
  \[
  c_j(T + h) = e^{\widehat{\lambda}_j h} c_j(T), \quad \text{where } \widehat{\lambda}_j = \log \widehat{\mu}_j.
  \]

  \item Reconstruct the future density using:
  \[
  \widehat{p}_{T+h}(x) = \widehat{p}_s(x) \left( 1 + \sum_{j=1}^r c_j(T+h) \widehat{\varphi}_j(x) \right).
  \]
\end{enumerate}

\subsection{Remarks on Implementation}

The computational cost is dominated by the eigen-decomposition of the operator $\widehat{K}_M$, which has a complexity $\mathcal{O}(J^3)$. For moderate values of $J$ (e.g., $J \leq 100$), this is feasible by using standard hardware. Techniques such as randomized SVD or kernel-based feature approximations may further accelerate computation.

\subsection{Relation to Functional and Kernel Methods}

DPDD offers a geometry-aware forecasting framework that seamlessly integrates with functional data analysis and kernel methods. Unlike traditional functional autoregression models, DPDD bypasses tangent-space projections and utilizes modal extrapolation in Wasserstein space.  Thus, it is valid for multivariate settings. Basis $\{\psi_j\}$ can be chosen from

\begin{itemize}
  \item orthogonal polynomials (e.g., Hermite functions),
  \item kernel eigenfunctions from Reproducing Kernel Hilbert Spaces (RKHS),
  \item random Fourier features for high-dimensional scalability \citep{rahimi2007random}.
\end{itemize}

This flexibility links DPDD directly to the themes of functional data and kernel-based modeling, as emphasized in this special issue.

\section{Simulation Studies}
\label{sec:simulation}

We assessed the performance of DPDD on simulated time series with known transition dynamics and compared it with benchmark methods, including Wasserstein Autoregression (WAR) and classical autoregressive models. Simulations were performed with $n_{\mathrm{exp}} = 500$ repetitions. Here, $M$ represents the length of the training trajectory.

In the first setup, we compared the prediction error in different stochastic systems between DPDD and WAR in the one-dimensional case and other baselines in the two-dimensional case because WAR is not valid for multi-variate case. In the second setup, we compared the prediction error between the weighted DPDD, unweighted DPDD and WAR, where the two results show a comparison with different $M$. In the third setup, we examined the long-term extrapolation predictive ability of DPDD. A simulation of the local stationary case was included in the first setup using an AR(2) process with coefficients drifting with time.

Both DPDD and WAR require the stochastic system to be stationary. However, in real life, we may encounter stochastic systems that are not stationary. To address this, we introduce the Sliding-window DPDD, which deals with the local stationary case in which the assumption is not seriously violated.

\subsection{Locally Stationary Setting: Sliding-Window DPDD}

The original DPDD assumes a globally stationary density $p_s$. However, real-world time series often exhibit \emph{local stationarity} in which distributions evolve slowly over time. To accommodate this, we propose a \emph{sliding-window} variant of DPDD (SW-DPDD), which re-estimates the Koopman eigenspace in a rolling window of length $W$.

Let $\{\mu_t\}_{t=1}^T$ denote the empirical distributions observed at evenly spaced times and let $\Phi(\cdot)$ denote the feature map associated with the chosen basis. For each time $t \geq W$, we define a local block $\mathcal{D}_t = \{\mu_{t-W+1}, \dots, \mu_t\}$ and proceed as follows:

\begin{enumerate}[leftmargin=20pt]
  \item Estimate the local stationary density $\widehat{p}_s^{(t)}$ via KDE on pooled samples in $\mathcal{D}_t$.
  \item Compute weighted Gram and cross-covariance matrices $G_t$, $A_t$ using local weights $w_k \propto \widehat{p}_s^{(t)}(x_k)$.
  \item Compute $K_t = A_t G_t^{\dagger}$ and extract the leading $N_{\text{mode}}$ eigenpairs.
  \item Project $\mu_t$ onto the eigenfunctions to obtain coefficients $c_i^{(t)}$.
  \item Predict $c_i^{(t+h)} = c_i^{(t)} e^{\lambda_i^{(t)} h}$ and reconstruct $\widehat{p}_{t+h}$.
\end{enumerate}

A practical rule for selecting $W$ is related to the empirical mixing time $\tau_{\text{mix}}$ from autocorrelation of the summary statistics. We used $W \in [2\tau_{\text{mix}}, 5\tau_{\text{mix}}]$, tuned via cross-validation over the training set.

\subsection{Simulations Setup}

First, we explored the predictive ability among different scenarios. We fixed a trajectory  with a length of $T = 20$ time points. The forecast performance was evaluated on the hold-out window $\mathcal{H} = \{T_0+1, \dots, T\}$ with $T_0 = \lfloor 0.7T \rfloor$. Therefore, we used the following data-generating processes:

\begin{enumerate}
  \item AR(1): $X_t = 0.9 X_{t-1} + \varepsilon_t$, $\varepsilon_t \sim \mathcal{N}(0, 0.49)$
  \item OU process: $\mathrm{d}X_t = -X_t \, \mathrm{d}t + 0.7 \, \mathrm{d}W_t$, $\Delta t = 0.01$
  \item AR+OU: additive combination of AR(1) and OU processes
  \item AR(2): $X_t = 0.6 X_{t-1} + 0.2 X_{t-2} + \varepsilon_t$, $\varepsilon_t \sim \mathcal{N}(0, 0.49)$
  \item 2D OU: two independent OU processes in $\mathbb{R}^2$
\end{enumerate}

Second, we changed the length of the training trajectory for a non-linear OU process: $\mathrm{d}X_t = (-aX_t+bX_t^3) \, \mathrm{d}t + c \, \mathrm{d}W_t$, $\Delta t = 0.01$ to test whether DPDD could capture the complex features of a stochastic system. Here, we set $a=0.5, b=0.1, c=0.7. $ And the length of the training trajectory $M$ was chosen from $256, 512, 1024, 2048$. Five hundred different trajectories were generated independently from the given stochastic dynamic systems, with the first half being the training set and the other half being the testing set.

In the third setup, we compared different extrapolation steps. Here, non-linear OU process: $\mathrm{d}X_t = (-aX_t+bX_t^3) \, \mathrm{d}t + c \, \mathrm{d}W_t$, $\Delta t = 0.01$ and AR(2): $X_t = 0.6 X_{t-1} + 0.2 X_{t-2} + \varepsilon_t$, $\varepsilon_t \sim \mathcal{N}(0, 0.49)$ were used to test DPDD in linear and non-linear processes. The extrapolation steps are chosen from 1, 2, 4, 8, 16, 32, 64, 128, 256, 512, and 1024.

We compared DPDD with WAR \citep{chen2023wasserstein} and classical VAR baselines for the AR models. The forecast error was measured using the empirical 2-Wasserstein mean-squared error (MSE):
\[
\mathrm{MSE}_{W_2} = \frac{1}{|\mathcal{D}_{\mathrm{test}}|} \sum_{x \in \mathcal{D}_{\mathrm{test}}} W_2^2(x, \widehat{p}).
\]

\begin{figure}[htbp]
  \centering
   \includegraphics[width=\linewidth]{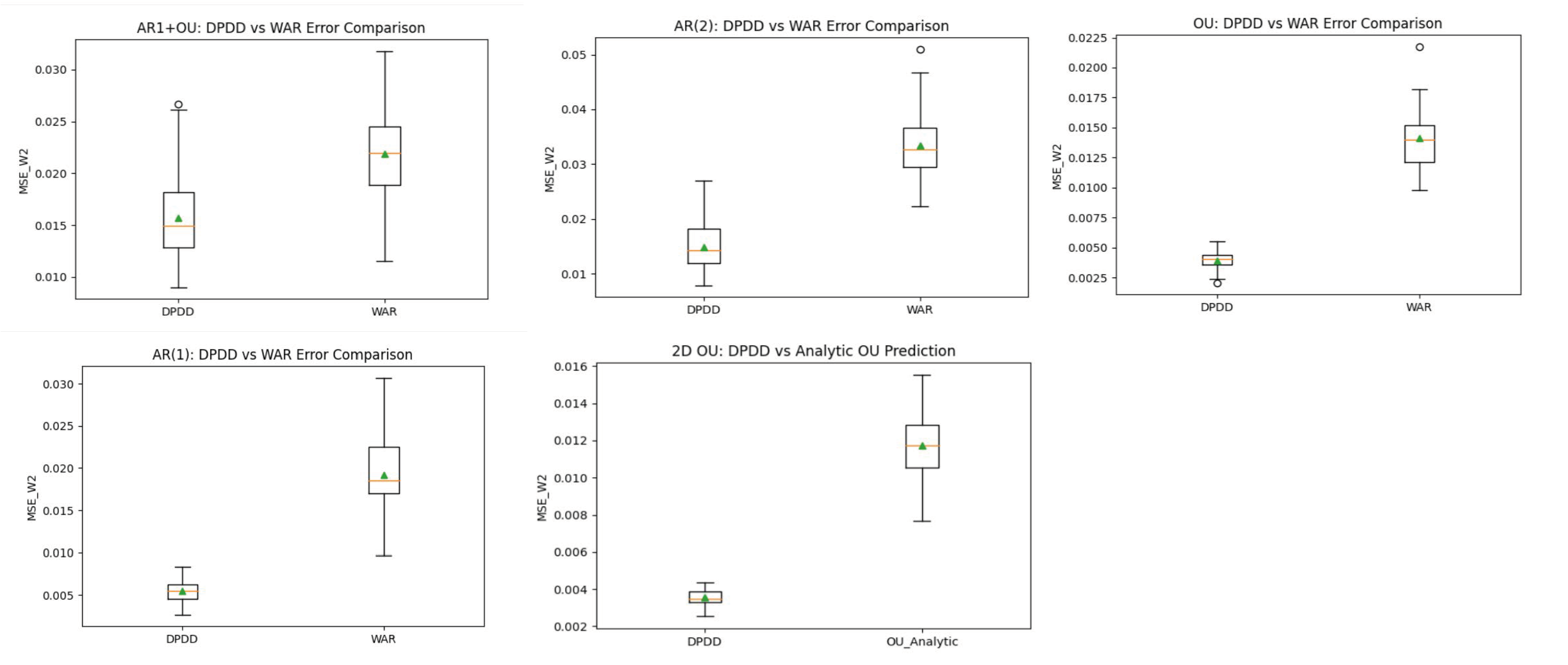}
  \caption{The boxplot of the result of the stationary case in the simulation experiment in Section \ref{sec:simulation}. The prediction error is evaluated by Wasserstein discrepancy. In all scenarios DPDD methodology attains the least $\mathrm{MSE}_{W_2}$}
  \label{fig:boxplot}
\end{figure}

\begin{table}[htbp]
  \centering
  \centering
  \caption{Error ($\mathrm{MSE}_{W_2}$) for each model in the simulation setup.}
  \begin{tabular}{|c|c|c|c|c|c|}
    \hline
     & AR(1) & AR(2) & OU & 2D OU & Mixing AR(1)+OU \\ \hline
    
    DPDD & 0.009 & 0.014 & 0.004 & 0.004 & 0.015 \\ \hline
    
    WAR & 0.019 & 0.033 & 0.015 & 0.012 & 0.022 \\ \hline

  \end{tabular}
  \vspace{5pt} 
  
  \label{tab:sim}
\end{table}

\begin{figure}[htbp]
  \centering
   \includegraphics[width=0.7\linewidth]{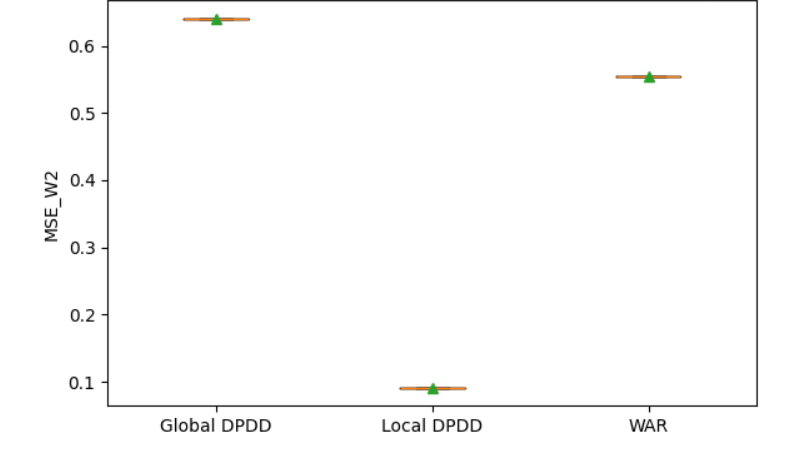}
  \caption{The boxplot of the result of the locally stationary case in the simulation experiment in Section \ref{sec:simulation}. The $\mathrm{MSE}_{W_2}$ of global DPDD is 0.642, that of WAR is 0.570, and that of sliding-window DPDD is 0.095.}
  \label{fig:figure3}
\end{figure}

\begin{figure}[htbp]
  \centering
   \includegraphics[width=0.7\linewidth]{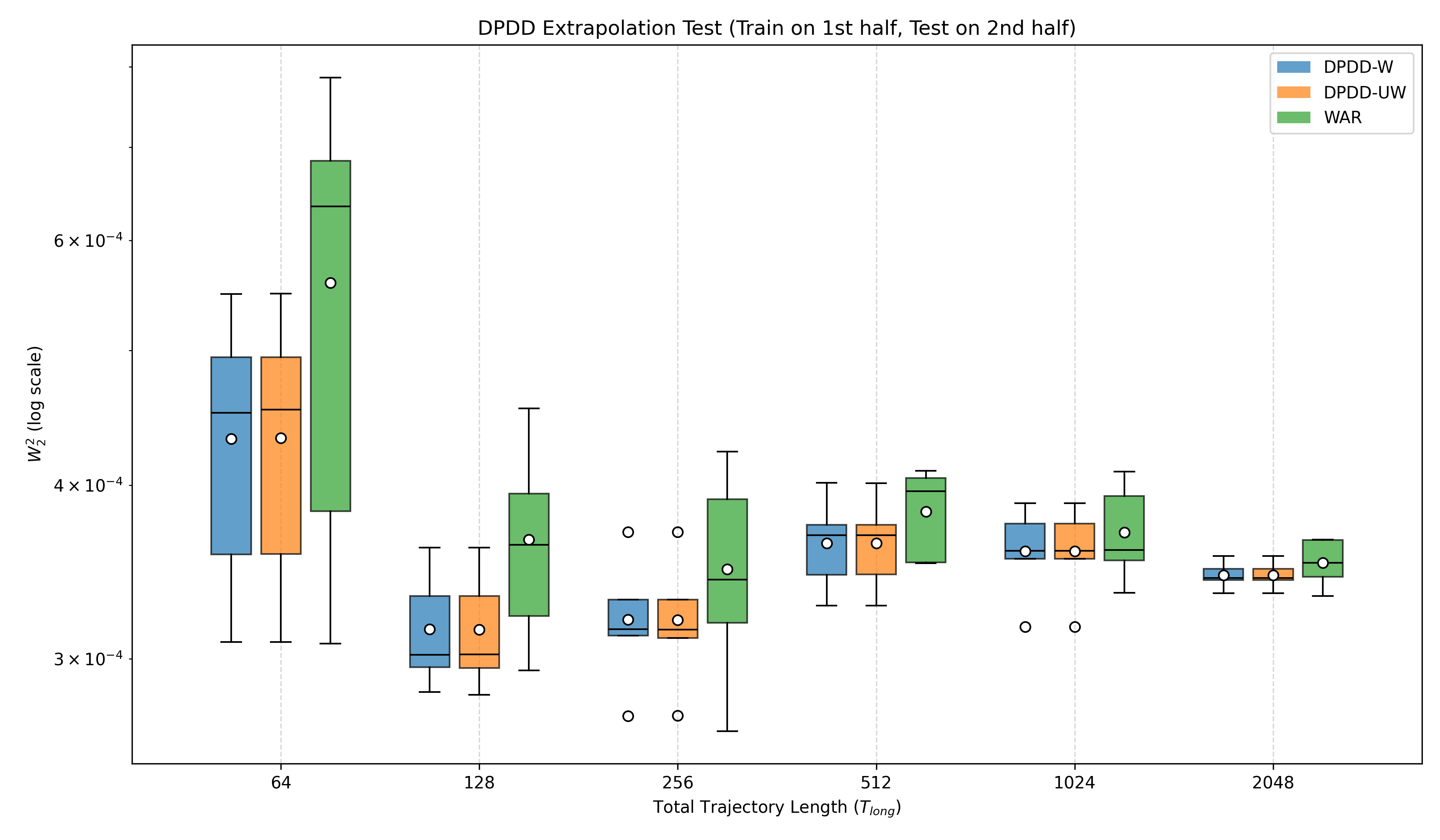}
  \caption{The boxplot of the result of the second setup in the simulation experiment in Section \ref{sec:simulation}. The length of the training trajectory $M$ varies. In general, DPDD outperforms WAR, which shows that DPDD successfully capture the dynamic feature of the stochastic system. Weighted DPDD is more accurate when the training trajectory is short, while there is no significant difference between weighted and unweighted DPDD when the training trajectory is long.}
  \label{fig:figure4}
\end{figure}

\begin{figure}[htbp]
  \centering
   \includegraphics[width=\linewidth]{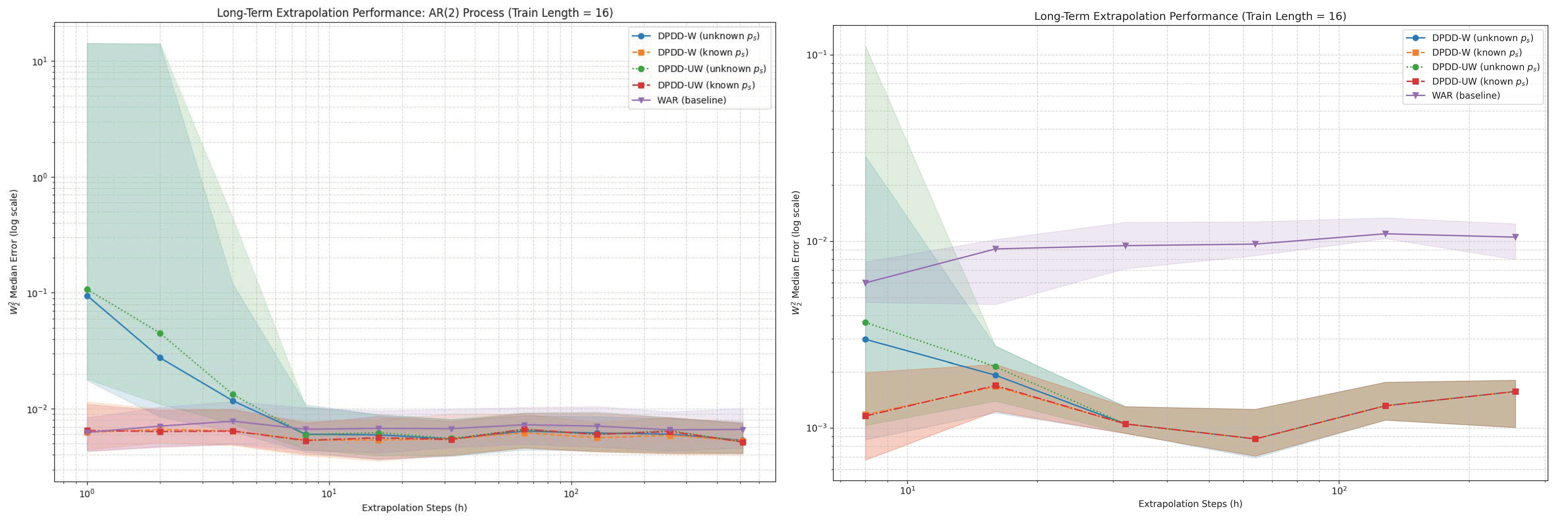}
  \caption{The boxplot of the third setup of the locally stationary case in the simulation experiment in Section \ref{sec:simulation}. The graph in the left is AR(2) process and the other is non-linear OU process. The prediction error of WAR increases when the extrapolation step $h$ is increasing, meanwhile DPDD achieves a good performance in long-term extrapolation in both settings.}
  \label{fig:figure5}
\end{figure}

\subsection{Results}

In the first setup, DPDD consistently yields the lowest $\mathrm{MSE}_{W_2}$ across all stationary scenarios, especially when the true dynamics admit a low-dimensional Koopman representation. WAR underestimates tail behavior because of its truncation of principal components, whereas classical baselines fail to capture distributional features entirely.

In a locally stationary setting, global DPDD suffers from performance degradation; however, the SW-DPDD variant substantially mitigates this, demonstrating its robustness to nonstationary drift. DPDD performed worse than WAR in local stationary experiments. In the case of local stationary conditions, the SW-DPDD should be used. Full error statistics are reported in Appendix~\ref{fig:boxplot}.

If the system experiences significant drift and does not exhibit local stationary characteristics, the theoretical foundation of DPDD—specifically, the previously mentioned decomposition of the probability density—will no longer be valid. Discrete eigenvalues may become a continuous spectrum, and the mode extraction becomes unstable. Furthermore, the extracted modes may reflect noise rather than true distribution changes, which significantly degrades the prediction performance.

In the second setup, we can see that DPDD performs better across all $M$ settings. However, as $M$ increases, the prediction error does not decrease consistently, owing to the KDE error.

In the third setup, we can see that in both stochastic systems, for long-term extrapolation prediction,  WAR suffers from step-by-step error accumulation, while DPDD has a good performance. At the same time, compared to the weighted DPDD, unweighted DPDD performs relatively poorly when the number of extrapolation steps is small; however, this gap will be gradually eliminated in long-term forecasts. Meanwhile, awareness of the stationary distribution will be helpful for short-term extrapolation, although for long-term prediction, knowing the stationary distributions will not be significant.

It can be seen from the experimental results that if the stationary distribution is known in advance or there is at least an accurate description of the stationary distribution, the prediction accuracy is significantly improved compared to when nothing is known about the stationary distribution (in this case, the stationary distribution must be estimated by KDE). The extrapolation experimental results also show that if the KDE estimation is not sufficiently accurate, it will lead to error accumulation in the long-term extrapolation estimation.

Figure~\ref{fig:figure5} compares the $W_2$ forecast error of the weighted/unweighted DPDD and WAR for the forecast horizons $h \in {1, 2, 4, 8, \dots, 1024}$. WAR’s error increases sharply with horizon length owing to step-by-step error accumulation, whereas DPDD maintains a stable accuracy for long horizons. Weighted DPDD shows clear advantages for short horizons; however, this gap narrows as the horizon grows.

\section{Real-World Application: U.S. Housing Market Forecasting}
\label{sec:app}

We analyze U.S. housing price distributions using Zillow’s \emph{Metro Median Sale Price} panel to demonstrate the practical utility of DPDD in a real-world functional time series context. This dataset comprises $m = 445$ U.S. metropolitan statistical areas (MSAs) with monthly inflation-adjusted median home prices from January 2008 to March 2025 ($T = 207$ time points). See \url{https://www.zillow.com/} for further detail.

Each month-specific vector $(p_{1t}, \dots, p_{mt})$ is normalized by its mean to form a relative price distribution as follows:
\[
\mu_t := \text{Empirical distribution of } \left( p_{1t}, \dots, p_{mt} \right) / \left( \tfrac{1}{m} \sum_{i=1}^m p_{it} \right).
\]

We split the data chronologically from January 2008 to December 2019 ($T_{\text{train}} = 144$) for training and from January 2020 to March 2025 ($T_{\text{test}} = 63$) for out-of-sample forecasting.

\subsection{Forecasting Procedure}
We compare two methods:
\begin{itemize}
  \item{Wasserstein Autoregression (WAR)}: Applies functional principal component analysis (FPCA) to the quantile functions $Q_t(u) = F_{\mu_t}^{-1}(u)$, retains components explaining 95\% of variance, and fits an AR(1) model to the FPCA scores using the GetProj–log correction of \citep{chen2023wasserstein}.
  \item{DPDD}: Estimates a global stationary density $\hat{p}_s$ via KDE over the training period. A polynomial EDMD is constructed using degree-2 basis functions, and the top two Koopman modes are used to forecast future modal coefficients: $\mathbf{c}_{t+h} = e^{\Lambda h} \mathbf{c}_t$.
\end{itemize}

For each test month, we compute the squared 2-Wasserstein distance between the predicted and empirical distributions:
\[
W_2^2(\hat\mu_t, \mu_t) = \int_0^1 \left( \hat{Q}_t(u) - Q_t(u) \right)^2 \, du,
\]
and the average error over all $T_{\text{test}}$ months.

\begin{figure}[htbp]
\centering
\includegraphics[width=0.7\linewidth]{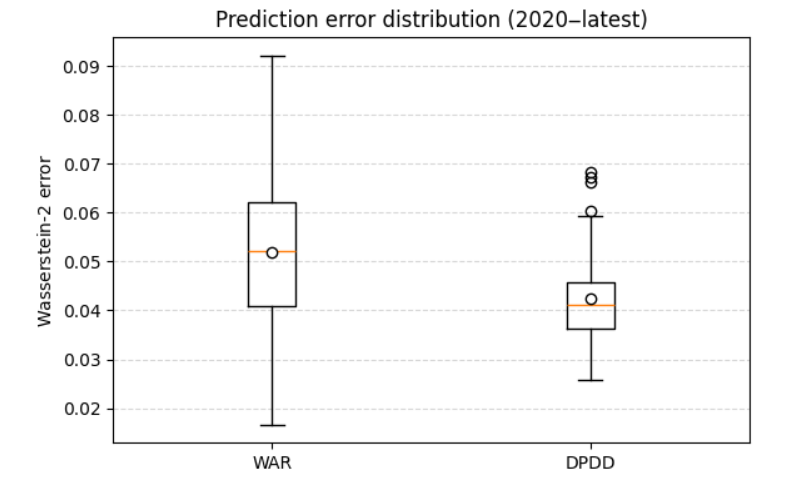}
\caption{Prediction errors for U.S. housing price distributions. DPDD achieves lower average Wasserstein MSE (0.042) than WAR (0.052), with greater robustness during volatile periods.}
\label{fig:boxplots}
\end{figure}

\begin{figure}[htbp]
\centering
\includegraphics[width=0.9\linewidth]{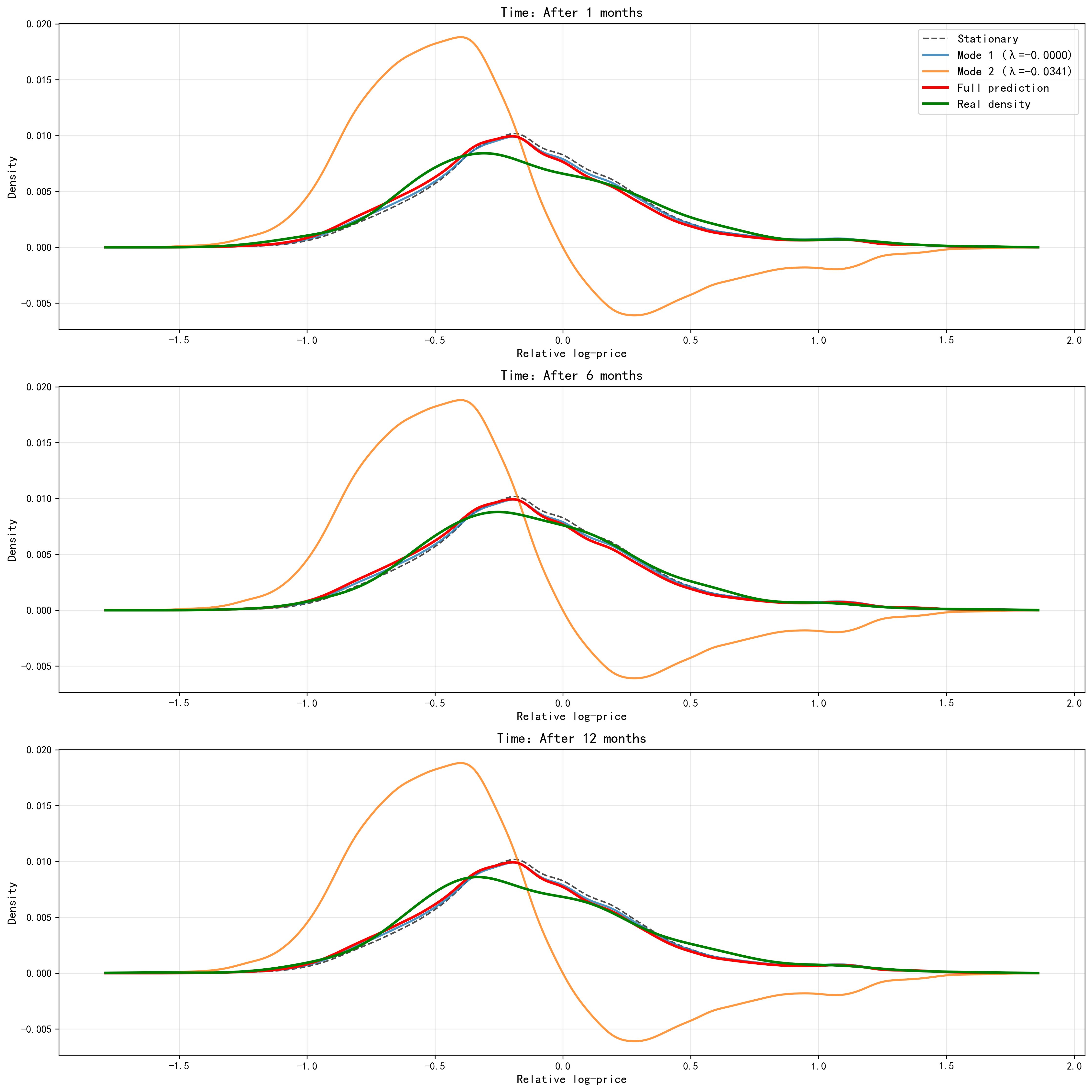}
\caption{The two major modes for the US house price data. Mode 1 captures the steady-state structure of the overall system, and Mode 2 represents the dynamic characteristics of the system.}
\label{fig:boxplots_modes}
\end{figure}

\begin{figure}[htbp]
\centering
\includegraphics[width=0.7\linewidth]{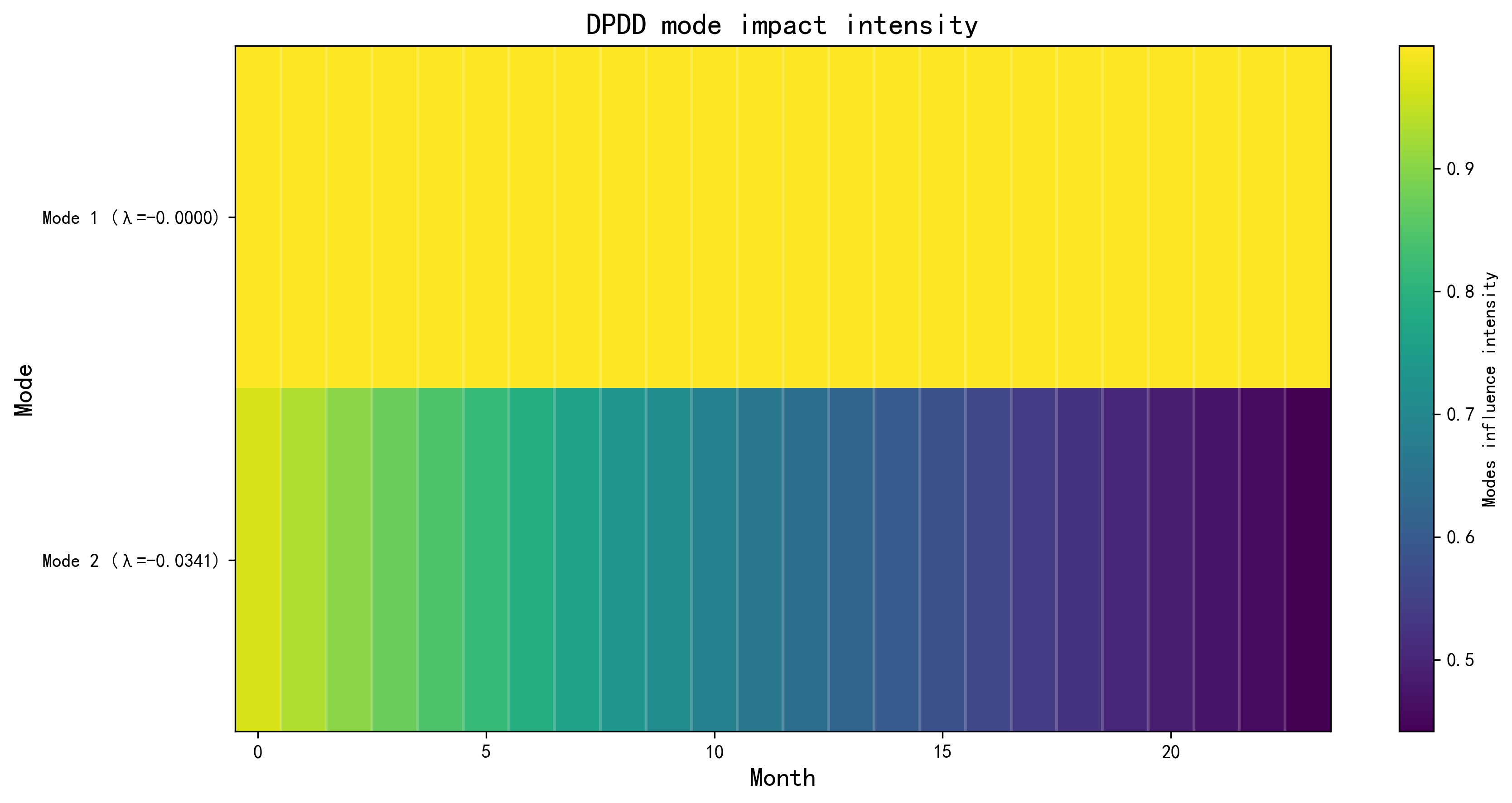}
\caption{The impact intensity of the modes in US house price data prediction. Mode 1 represents a stationary distribution, showing stable characteristics for all months; Mode 2 represents a fluctuating impact, showing different impacts on different forecast months.}
\label{fig:boxplots_heat}
\end{figure}

\subsection{Discussion of Results}

The mean $W_2$ error of DPDD is 0.042, which improves upon the WAR’s 0.052 over the test set. The advantage of DPDD is particularly evident during periods of high volatility (e.g., 2021–2022) when market-wide shifts affect all MSAs simultaneously. In contrast, WAR sometimes performs better during abrupt local corrections (e.g., April 2020 or November 2022) because of its short-horizon adaptation.

Mode 1 reflects the underlying structure of housing price distribution in the United States: some cities are consistently more expensive than others. Mode 2 reflects market volatility: how prices move from one range to another during the economic cycle.

DPDD benefits from:
\begin{itemize}
  \item capturing the smooth, long-term evolution of spatial distributions via Koopman modes;
  \item exploiting the global structure of the density manifold rather than local linearization;
  \item maintaining stability in low-variance months, where WAR's FPCA-based scores become noisy.
\end{itemize}

These findings highlight the suitability of DPDD for medium-to-long-term distributional forecasting in functional economic time series, where interpretability and robustness are critical.

In the U.S. house price data experiment, the construction of $G$ and $A$ matrix cost $O(MJ^2)$, whereas SVD costs $O(J^3)$ theoretically. In our experiment, the SVD part required 19.0583 s when $J$ was 2 and  56.7727 s when $J$ was 3.

\section{Discussion}
\label{sec:discussion}

This study presents a functional time series forecasting method specifically designed for probability distributions. By utilizing the spectral decomposition of the Koopman operator and integrating it with optimal transport geometry, the proposed DPDD framework provides an interpretable and closed-form approach to modeling distributional dynamics.

The idea of DPDD originates from the work of \citet{zhao2022dpdd}; however, unlike their use of DPDD to characterize diffusion in complex systems, we focus on extending DPDD for forecasting time series on the Wasserstein manifold and provide the corresponding theoretical foundation. The experimental results demonstrate that DPDD outperforms WAR in both long-term extrapolation and real-data prediction, and it can be naturally applied to multidimensional distributions. This discussion highlights the implications of our findings and suggests future research directions at the intersection of functional data analysis and kernel-based methods.

In higher dimensions ($d > 1$), classical Wasserstein-based methods like WAR face limitations due to the lack of explicit multivariate quantile functions and a canonical tangent space structure. These limitations hinder both interpretability and scalability. In contrast, DPDD addresses these challenges by learning a spectral basis directly from trajectory data, eliminating the need for log-exp maps or tangent space projections. This capability is especially beneficial in multivariate contexts where traditional functional principal components can become ill-posed or computationally expensive.

The basis expansion in DPDD can flexibly incorporate polynomial, kernel, or data-driven features. For example, tensor-product Hermite polynomials and random Fourier features offer practical choices in high-dimensional settings. This flexibility aligns well with the goals of kernel-based learning and functional approximation, reinforcing compatibility of the DPDD with the themes of this special issue.

The effectiveness of the traditional EDMD depends heavily on the choice of basis (dictionary) functions, and the computational cost increases substantially with larger dictionary sizes $J$. According to  theoretical results, the Koopman operator can only be accurately approximated when the dictionary is sufficiently rich. However, in the systems considered in this study, Hermite or Chebyshev polynomials up to the order of four were sufficient to capture most of the nonlinear features. Fourier bases are effective for systems with periodic dynamics. An excessively large $J$ may cause overfitting and numerical instability, whereas too small a $J$ may fail to capture the important dynamic modes. These observations provide practical guidance for selecting both the type and size of the basis  DPDD.

It is worth noting that when the sample sequence is small, extreme cases can have a certain impact on the training; therefore, truncating small weights is beneficial. Furthermore, certain dictionary functions generate negative probability densities. Therefore, when using the dictionary function for projection, logarithmic density can be used to avoid negative probability densities.

The choice of the number of modes also has a non-negligible impact on the results. The eigenvalues of the Koopman matrix need not be real; their moduli (the absolute values in the real case) represent the energy of the modes. Our selection rule is to truncate where a pronounced spectral gap is observed—that is, when the eigenvalue magnitudes drop sharply.\\
If the system significantly deviates from stationarity or if the stationary density $p_s$ is inaccurately estimated, the theoretical basis of DPDD may break down. In these situations, discrete Koopman eigenvalues can converge into a continuous spectrum, making mode extraction unstable. Consequently, the resulting modes may capture sampling noise instead of true distributional changes, which can lead to considerable forecast degradation. While SW-DPDD alleviates some of these problems by locally updating the baseline, additional strategies—such as enhancing the accuracy of $p_s$ estimation (e.g., more data, refined KDE) or using robust spectral estimation techniques—can further improve resilience to these challenges. As for KDE-related parameter selection (e.g., bandwidth), please refer to \citep{silverman1986density}.

In terms of applications, DPDD is well-suited for domains in which the object of interest evolves as a distribution rather than a point. These include financial risk modeling (e.g., evolving return distributions), climate science (e.g., spatial temperature fields), and neuroimaging (e.g., activation maps across time). Beyond forecasting distributional time series, DPDD also has the potential to capture the dynamical characteristics of parameter distributions arising during the training of deep neural networks. The ability to forecast entire distributions in a geometry-consistent manner offers not only predictive power, but also interpretability, which is an important consideration in scientific applications.

Several extensions of this framework merit further attention. One important approach is adaptive basis selection, in which a dictionary is learned from the data to better capture the dominant modes of variability. Another promising avenue is the development of online or streaming variants of DPDD that can update the modal coefficients in real time. Furthermore, a rigorous analysis of the approximation error introduced by kernel-based or neural approximations of Koopman eigenfunctions remains an open challenge. Finally, hybrid models that combine the localized interpretability of WAR with the global spectral forecasting capability of DPDD offer the best results for both worlds.

Overall, DPDD provides a statistically grounded and geometrically reliable tool for distributional timeseries analysis. The integration of functional approximation, optimal transport, and spectral operator theory places it at the intersection of several active areas of statistics and machine learning, making it a strong candidate for future exploration and applications.

\section*{Acknowledgments}
We sincerely thank the two reviewers for their thoughtful and constructive comments, which have greatly improved the quality of the manuscript. This work was partially supported by JSPS KAKENHI Grant Numbers 25H01464 and 23K28042. The authors thank Dr. Masaaki Imaizumi and Dr. Ryo Okano for their insightful feedback on parts of this study. 

\section*{Data Availability}
The data used in the experiments in Section \ref{sec:app} can be find on 
\url{https://www.zillow.com/}.

\appendix
\section{Notation and Standing Assumptions}\label{app:notation}

We recap all symbols that enter the proofs and explicitly list the technical
conditions under which the main results hold.

\subsection*{A.1  Symbols}

\begin{tabular}{@{}ll@{}}\toprule
Symbol & Description \\ \midrule
$\mathcal{S}$                & compact support of the stationary density $p_{\mathrm{s}}$ \\
$\{\psi_{j}\}_{j\ge1}$       & $L^{2}(p_{\mathrm{s}})$–orthonormal dictionary          \\
$J$                          & number of basis functions retained                     \\
$\Psi_{J}(x)$                & $(\psi_{1}(x),\dots,\psi_{J}(x))^{\mathsf{T}}$          \\
$G_{M},A_{M}$                & weighted EDMD matrices, Eq.~\eqref{eq:GA_mats}          \\
$G_{\ast},A_{\ast}$          & population analogues of $G_{M},A_{M}$                  \\
$\varphi_{i},\lambda_{i}$    & true Koopman eigenfunctions and eigenvalues            \\
$\xi_{i},\mu_{i}$            & projections $\Pi_{J}\varphi_{i}$ and $e^{\lambda_{i}\Delta t}$\\
$\widehat\xi_{i},\widehat\mu_{i}$ & empirical eigenpairs of $A_{M}G_{M}^{\dagger}$   \\
$c_{i}(t)$                   & modal coefficient in expansion~\eqref{eq:q_expansion}  \\
$\widehat{c}_{i}(t)$         & estimated modal coefficient                            \\ \bottomrule
\end{tabular}

\subsection*{A.2  Assumptions}
\begin{enumerate}[label=\textbf{(A\arabic*)},leftmargin=20pt,itemsep=4pt]
\item\label{assump:KDE1}  \textbf{Density smoothness}
$p_s \in C^2(S)$ and $\sup_{x\in S}\|\nabla^2 p_s(x)\| \le C_p$.
\item\label{assump:KDE2}  \textbf{Kernel properties}
$K$ is bounded, symmetric, of order 2, $\int K = 1$, 
                 and Lipschitz with constant $L_K$.
\end{enumerate}
\begin{enumerate}[label=\textbf{(B\arabic*)},leftmargin=20pt,itemsep=4pt]
\item\label{B:ergodic} \textbf{Ergodic diffusion}  
      The SDE~\eqref{eq:sde} is ergodic with unique stationary density
      $p_{\mathrm{s}}$; its trajectory is exponentially $\beta$--mixing:
      $\beta(\tau)\le C_{0}e^{-\kappa\tau}$.

\item\label{B:dictionary} \textbf{Dictionary completeness}  
      There exist constants $C_{\varphi},q>0$ such that for each leading mode
      $\varphi_{i}$ ($i=1,\dots,r$),
      $\|(\mathrm{Id}-\Pi_{J})\varphi_{i}\|_{L^{2}(p_{\mathrm{s}})}
          \le C_{\varphi}J^{-q}$.

\item\label{B:bounded} \textbf{Moment bound}  
      $\sup_{x\in\mathcal{S}}\|\Psi_{J}(x)\|^{4}\le C_{\Psi}$ for all $J$.

\item\label{B:gram} \textbf{Population non--degeneracy}  
      The Gram matrix
      $G_{\ast}=\mathbb{E}_{p_{\mathrm{s}}}[\Psi_{J}(X)\Psi_{J}(X)^{\top}]$
      is positive definite with $\lambda_{\min}(G_{\ast})\ge\lambda_{0}>0$.

\item\label{B:gap} \textbf{Spectral gap}  
      $\min_{1\le i\le r}(\lambda_{i-1}-\lambda_{i})\ge\gamma>0$ with
      $\lambda_{0}=0$.

\item\textbf{Dictionary smoothness} 
Each basis function $\psi_j$ is $L_{\Psi}$–Lipschitz on $S$:
\[
  |\psi_j(x)-\psi_j(y)| \le L_{\Psi}\,\|x-y\|, 
  \qquad \forall x,y\in S,\; j=1,\dots,J.
\]
\label{assump:dict-lip}
\end{enumerate}
Conditions~\ref{B:ergodic}--\ref{B:bounded} guarantee concentration of
empirical moments; \ref{B:gram} ensures invertibility of $G_{M}$ for large
$M$; \ref{B:gap} stabilizes eigenvectors under perturbations.

\section{Auxiliary Lemmas}\label{app:lemmas}

\begin{lemma}[Concentration of Gram and Cross Matrices]\label{lem:GA_conc}
Under \textbf{\ref{B:ergodic}--\ref{B:bounded}} such that for every $t>0$
\[
  \Pr\!\{\|G_M-G_\ast\|_2\ge t\}\;\le\;
      J\exp\!\bigl(-cMt^2\bigr),\qquad
  \Pr\!\{\|A_M-A_\ast\|_2\ge t\}\;\le\;
      J\exp\!\bigl(-cMt^2\bigr).
\]
Consequently\[
  \mathbb{E}\|G_{M}-G_{\ast}\|_{2}=O(M^{-1/2}),\qquad
  \mathbb{E}\|A_{M}-A_{\ast}\|_{2}=O(M^{-1/2}).
\]
\end{lemma}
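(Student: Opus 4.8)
The plan is to establish the tail bound for $\|G_M - G_\ast\|_2$ first, then obtain the bound for $\|A_M - A_\ast\|_2$ by an essentially identical argument, and finally integrate the tails to recover the $O(M^{-1/2})$ bounds on the expectations. Write $G_M = \sum_{k=1}^M w_k \Psi(z_k)\Psi(z_k)^{\mathsf T}$ with $w_k \propto \widehat p_{\mathrm s}(z_k)$. The first step is to replace the data-dependent weights $w_k$ by the idealized weights $w_k^\ast \propto p_{\mathrm s}(z_k)$ (or, after normalization, treat the importance-weighting as producing an average over a re-weighted empirical measure whose population mean is $G_\ast$). Under \ref{assump:KDE1}--\ref{assump:KDE2} the KDE satisfies $\sup_x |\widehat p_{\mathrm s}(x) - p_{\mathrm s}(x)| = O_{\mathbb P}(\cdot)$ at the usual nonparametric rate, which is faster than $M^{-1/2}$ is \emph{not} guaranteed, so one must be careful: the cleanest route is to note that $w_k$ enters only through ratios, and to bound $\|G_M - G_\ast\|_2$ by a term involving $\sup_x\|\Psi(x)\|^2 = O(C_\Psi^{1/2})$ times the total-variation-type discrepancy between the weighted empirical measure and $p_{\mathrm s}$. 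I will state this weight-replacement as a preliminary reduction and absorb its (lower-order) contribution, so that the dominant term is the fluctuation of $\sum_k w_k^\ast \Psi(z_k)\Psi(z_k)^{\mathsf T}$ around its mean.

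The core step is a concentration inequality for $\|G_M - G_\ast\|_2$ for a $\beta$-mixing stationary sequence. I would use a blocking argument (Bernstein-type inequality for exponentially $\beta$-mixing sequences, e.g. the Merlev\`ede–Peligrad–Rio or Bosq bounds): partition the $M$ samples into $\approx M/b$ blocks of length $b \asymp \log M$, couple the blocks to independent copies at cost $\beta(b) \le C_0 e^{-\kappa b}$ which is $o(M^{-1})$ for $b$ a large-enough multiple of $\log M$, and then apply a matrix Bernstein inequality to the block sums of the centered matrices $\Psi(z_k)\Psi(z_k)^{\mathsf T} - G_\ast$. The boundedness $\|\Psi(x)\|^2 \le C_\Psi^{1/2}$ from \ref{B:bounded} controls the per-term operator norm, and the same bound controls the matrix variance proxy; the union bound over the $J$ eigen-directions (or the dimension factor in matrix Bernstein) produces the prefactor $J$. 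This yields $\Pr\{\|G_M - G_\ast\|_2 \ge t\} \le J\exp(-cMt^2)$ for $t$ in the relevant range, possibly after rescaling $c$ to swallow the $\log M$ block overhead. The argument for $A_M - A_\ast$ is identical, with the summands $\Psi(z_{k+1})\Psi(z_k)^{\mathsf T} - A_\ast$; these are not symmetric but are still bounded in operator norm by $C_\Psi^{1/2}$, so matrix Bernstein (in its rectangular/non-symmetric form via Hermitian dilation) applies verbatim.

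Finally, to pass from the tail bound to the expectation: integrate the tail, $\mathbb E\|G_M - G_\ast\|_2 = \int_0^\infty \Pr\{\|G_M - G_\ast\|_2 \ge t\}\,dt$. Split at $t_0 = \sqrt{(\log J + 1)/(cM)}$; the integral over $[0,t_0]$ is at most $t_0 = O(\sqrt{\log J / M})$, and the integral over $[t_0,\infty)$ of $J e^{-cMt^2}$ is $O(M^{-1/2})$ by a standard Gaussian-tail estimate. Since the statement treats $J$ as fixed (or $J \ll M^{1/2}$, in which case $\log J = o(\log M)$ and the $\sqrt{\log J}$ factor is harmless at the level of the $O(M^{-1/2})$ rate when $J$ is held fixed), this gives $\mathbb E\|G_M - G_\ast\|_2 = O(M^{-1/2})$, and likewise for $A_M$.

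The main obstacle I anticipate is the handling of the random importance weights $w_k$ in the first step. Unlike the clean i.i.d. case, here the weights are a nonlinear functional (a normalized KDE) of the same trajectory used to form the empirical moments, creating a dependence that must be decoupled; moreover the KDE error does not automatically beat the $M^{-1/2}$ target rate, so one needs either an assumption on the bandwidth (so that $\sup_x|\widehat p_{\mathrm s} - p_{\mathrm s}| = o(M^{-1/2})$, e.g. undersmoothing) or a more careful argument showing the weight-induced error enters only through a centered, higher-order term. I would state the needed bandwidth condition explicitly (it is consistent with \ref{assump:KDE1}--\ref{assump:KDE2} plus a rate condition on $h$) and defer the full decoupling to the same blocking machinery used for the main term.
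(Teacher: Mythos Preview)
Your approach is essentially the same as the paper's: both reduce the lemma to a matrix Bernstein inequality for exponentially $\beta$-mixing sequences, using the boundedness from \ref{B:bounded} to control the summands and variance proxy. The only substantive difference is packaging: the paper invokes a ready-made matrix Bernstein for geometrically $\beta$-mixing random matrices due to Banna--Merlev\`ede--Youssef (2016) directly on the centered summands $Y_k = w_k\Psi_J(z_k)\Psi_J(z_k)^\top - G_\ast$, whereas you reconstruct the same machinery by hand via blocking plus independent-block matrix Bernstein. Both routes lead to the same sub-Gaussian tail with the dimension prefactor $J$, and your tail-integration argument for the expectation bound is exactly what is needed (the paper's proof omits this step).

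Your caution about the data-dependent importance weights $w_k \propto \widehat p_{\mathrm s}(z_k)$ is well placed and in fact goes beyond the paper's own proof, which applies the Bernstein inequality directly to summands involving the random $w_k$ without decoupling the KDE from the trajectory. The paper handles the KDE error separately in Lemma~\ref{lem:kde}, where a bandwidth choice $h_M = M^{-\beta}$ with $\beta \in (\tfrac13,\tfrac12)$ gives $\|\widehat p_{\mathrm s} - p_{\mathrm s}\|_\infty = o_{\mathbb P}(M^{-1/2})$; this is precisely the undersmoothing condition you anticipated, so your reduction to idealized weights is consistent with the paper's framework even though the paper does not invoke it explicitly in the proof of this lemma.
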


\begin{proof}

Applying the matrix Bernstein inequality for geometrically $\beta$–mixing random matrices in \citep{BannaMerlevedeYoussef2016} to the centered summands $Y_k:=w_k\Psi_J(z_k)\Psi_J(z_k)^\top-G_\ast$ whose bounded spectrum and variance are guaranteed by Assumptions~\ref{B:ergodic}--\ref{B:bounded} yields the tail bound $\|G_M-G_\ast\|_2,\|A_M-A_\ast\|_2=O_{\mathbb{P}}(M^{-1/2})$, proving the lemma.

\end{proof}

\begin{lemma}[Operator Perturbation]\label{lem:op_perturb}
Let $\mathcal{K}^{\Pi}_{\ast}=A_{\ast}G_{\ast}^{-1}$ and
$\widehat{\mathcal{K}}_{M}=A_{M}G_{M}^{\dagger}$.
Under Assumptions \textbf{\ref{B:ergodic}--\ref{B:gram}},
\[
  \|\widehat{\mathcal{K}}_{M}-\mathcal{K}^{\Pi}_{\ast}\|_{2}
  = O_{\mathbb{P}}(M^{-1/2}).
\]
\end{lemma}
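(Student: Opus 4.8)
The plan is to decompose the perturbation $\widehat{\mathcal{K}}_M - \mathcal{K}_\ast^\Pi = A_M G_M^\dagger - A_\ast G_\ast^{-1}$ via the standard telescoping identity
\[
  A_M G_M^\dagger - A_\ast G_\ast^{-1}
  = (A_M - A_\ast)G_M^\dagger + A_\ast\bigl(G_M^\dagger - G_\ast^{-1}\bigr),
\]
and then control each summand in operator norm using Lemma~\ref{lem:GA_conc}. For the first term I would bound $\|(A_M-A_\ast)G_M^\dagger\|_2 \le \|A_M-A_\ast\|_2\,\|G_M^\dagger\|_2$; the first factor is $O_{\mathbb P}(M^{-1/2})$ by Lemma~\ref{lem:GA_conc}, and the second factor is $O_{\mathbb P}(1)$ once we argue that $G_M$ is eventually invertible with $\lambda_{\min}(G_M)$ bounded below. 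The latter is where Assumption~\ref{B:gram} enters: since $\lambda_{\min}(G_\ast)\ge\lambda_0>0$ and $\|G_M-G_\ast\|_2 = o_{\mathbb P}(1)$, Weyl's inequality gives $\lambda_{\min}(G_M)\ge\lambda_0/2$ on an event of probability tending to one, on which $G_M^\dagger = G_M^{-1}$ and $\|G_M^\dagger\|_2 \le 2/\lambda_0$. For the second term I would use the resolvent-type identity $G_M^{-1} - G_\ast^{-1} = G_M^{-1}(G_\ast - G_M)G_\ast^{-1}$, valid on the same event, so that $\|G_M^{-1}-G_\ast^{-1}\|_2 \le \|G_M^{-1}\|_2\,\|G_\ast-G_M\|_2\,\|G_\ast^{-1}\|_2 \le (2/\lambda_0)(1/\lambda_0)\,\|G_M-G_\ast\|_2 = O_{\mathbb P}(M^{-1/2})$, and $\|A_\ast\|_2$ is a finite deterministic constant (bounded via Assumption~\ref{B:bounded} and Cauchy--Schwarz, since $A_\ast = \mathbb E_{p_{\mathrm s}}[\Psi_J(X_{\Delta t})\Psi_J(X_0)^\top]$).

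Combining the two bounds gives $\|\widehat{\mathcal{K}}_M - \mathcal{K}_\ast^\Pi\|_2 = O_{\mathbb P}(M^{-1/2})$ on the high-probability event, and since the complementary event has vanishing probability this upgrades to the claimed unconditional $O_{\mathbb P}(M^{-1/2})$ rate. A minor bookkeeping point: $\|A_\ast\|_2$ should be bounded uniformly in $J$ (or at least with controlled $J$-dependence) so that, under the regime $J \ll M^{1/2}$ used elsewhere, the product bounds do not reintroduce a diverging factor; Assumption~\ref{B:bounded}, which controls $\sup_x\|\Psi_J(x)\|^4$ uniformly in $J$, is exactly what makes $\|G_\ast\|_2$ and $\|A_\ast\|_2$ bounded uniformly in $J$.

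The main obstacle I anticipate is the rigorous handling of the pseudoinverse-to-inverse transition: $G_M^\dagger$ is only guaranteed to equal $G_M^{-1}$ on the event $\{\lambda_{\min}(G_M)>0\}$, and off that event $\|G_M^\dagger\|_2$ is not controlled at all, so the argument must be organized conditionally on the good event $\mathcal E_M := \{\lambda_{\min}(G_M)\ge\lambda_0/2\}$ with $\Pr(\mathcal E_M)\to1$, and one must be careful that the definition of $O_{\mathbb P}$ tolerates this (it does: a sequence that is $O_{\mathbb P}(r_M)$ on events of probability $\to1$ is $O_{\mathbb P}(r_M)$). A secondary subtlety is that the spectral-norm bounds on $A_M - A_\ast$ from Lemma~\ref{lem:GA_conc} are stated for fixed $J$; if one wants the rate to be genuinely uniform as $J$ grows with $M$, one needs the $J\exp(-cMt^2)$ tail to still deliver $O_{\mathbb P}(M^{-1/2})$ after a union bound, which forces $J = o(e^{cM t^2})$ — automatically satisfied under $J\ll M^{1/2}$. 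I would state the lemma and its proof for fixed $J$ (as written) and remark that the extension is routine under the dictionary-growth regime.
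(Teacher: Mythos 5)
Your proposal is correct and follows essentially the same route as the paper's proof: the identical telescoping decomposition $(A_M-A_\ast)G_M^\dagger + A_\ast(G_M^\dagger - G_\ast^{-1})$, the resolvent-type identity for $G_M^\dagger - G_\ast^{-1}$, the bound $\|G_M^\dagger\|_2 \le 2/\lambda_0$ on the high-probability event from Assumption~\ref{B:gram}, and Lemma~\ref{lem:GA_conc} for the rates. The only difference is that you spell out details the paper leaves implicit (Weyl's inequality for $\lambda_{\min}(G_M)$, the good-event argument for the pseudoinverse, and the boundedness of $\|A_\ast\|_2$ via Assumption~\ref{B:bounded}), which is sound but not a different method.
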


\begin{proof}
Write $\widehat{\mathcal{K}}_{M}-\mathcal{K}^{\Pi}_{\ast}
 =(A_{M}-A_{\ast})G_{M}^{\dagger}
  +A_{\ast}(G_{M}^{\dagger}-G_{\ast}^{-1})$.
Lemma~\ref{lem:GA_conc} bounds the first term.
For the second,
$G_{M}^{\dagger}-G_{\ast}^{-1}=G_{\ast}^{-1}(G_{\ast}-G_{M})G_{M}^{\dagger}$
and use $\|G_{M}^{\dagger}\|_{2}\le2/\lambda_{0}$ for large $M$.
\end{proof}
\begin{lemma}\label{lem:phi-lip}
Under Assumption~\ref{assump:dict-lip}, each coordinate of $\Phi(x)=\Xi^\top\Psi(x)$ is Lipschitz on the support of $p_{\mathrm{s}}$.
\end{lemma}

\begin{lemma}[Uniform KDE error]\label{lem:kde}
Let $\hat p_s$ denote the kernel density estimator with bandwidth
$h_M=M^{-\beta}$, where $\beta\in(0,1/2)$.  Under
Assumptions~\ref{assump:KDE1}--\ref{assump:KDE2} show that
\[
  \|\hat p_s-p_s\|_\infty \;=\;
  O_P\!\bigl(M^{-\beta}\bigr)
  \;+\;
  O_P\!\bigl(M^{-(1-\beta)/2}\bigr).
\]
Choosing any $\beta\in(\tfrac13,\tfrac12)$ yields
$\|\hat p_s-p_s\|_\infty=o_P(M^{-1/2})$.
\end{lemma}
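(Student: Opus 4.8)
The plan is the classical bias--variance decomposition for kernel density estimators, carried out uniformly over the compact support $\mathcal{S}$, with the temporal dependence of the trajectory absorbed by a concentration inequality for mixing sequences. Writing $\bar p_M(x):=\mathbb{E}\,\hat p_s(x)=h_M^{-d}\!\int K\!\big((x-z)/h_M\big)\,p_s(z)\,dz$, I would split
\[
  \|\hat p_s-p_s\|_\infty
  \;\le\;
  \underbrace{\|\bar p_M-p_s\|_\infty}_{B_M\ \text{(deterministic bias)}}
  \;+\;
  \underbrace{\|\hat p_s-\bar p_M\|_\infty}_{V_M\ \text{(stochastic fluctuation)}},
\]
and bound $B_M$ first (routine) and then $V_M$, which is where the real work lies.

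For the bias, the change of variables $u=(x-z)/h_M$ gives $B_M(x)=\int K(u)\big(p_s(x-h_M u)-p_s(x)\big)\,du$. Since $p_s\in C^2(\mathcal{S})$ is in particular Lipschitz on $\mathcal{S}$ by Assumption~\ref{assump:KDE1}, and $K$ has finite first absolute moment (which follows from $K$ being of order $2$ with $\int K=1$, via Cauchy--Schwarz), a first-order Taylor bound gives $\sup_x|B_M(x)|\le L_{p_s}h_M\!\int\!\|u\|\,K(u)\,du=O(h_M)=O(M^{-\beta})$, already matching the stated term; exploiting in addition the symmetry of $K$ (so $\int u\,K(u)\,du=0$) together with the Hessian bound $\sup_x\|\nabla^2 p_s\|\le C_p$ sharpens it to $O(h_M^2)$. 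Boundary effects near $\partial\mathcal{S}$ are handled in the usual way (assuming $p_s$ extends $C^2$ to a neighbourhood of $\mathcal{S}$, or via a trimming/boundary-kernel argument) and do not affect the rate.

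The fluctuation term is the main obstacle. Fix $x$ and write $V_M(x)=\tfrac1M\sum_{k=1}^M\big(g_x(z_k)-\mathbb{E}\,g_x(z_k)\big)$ with $g_x(z)=h_M^{-d}K\!\big((x-z)/h_M\big)$, so $\|g_x\|_\infty\le\|K\|_\infty h_M^{-d}$ and, by a change of variables, $\mathrm{Var}\big(g_x(Z)\big)=O(h_M^{-d})$. If $\{z_k\}$ is treated as i.i.d.\ from $p_s$, a Bernstein inequality gives $\Pr(|V_M(x)|>t)\le 2\exp(-c\,Mh_M^d\,t^2)$ for $0<t\lesssim 1$; under the exponential $\beta$--mixing of Assumption~\ref{B:ergodic} (which is in force throughout) the same tail holds up to constants and at most a $\log M$ factor --- via the Bernstein-type bound for geometrically $\beta$--mixing sequences of \citep{BannaMerlevedeYoussef2016} used in Lemma~\ref{lem:GA_conc}, or a blocking argument with block length $\asymp\log M$, after noting that splitting the covariance sum at lag $\asymp\log M$ keeps $\mathrm{Var}(V_M(x))=O(h_M^{-d}\log M/M)$. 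To pass from pointwise to uniform control, cover $\mathcal{S}$ by a $\delta$--net $\mathcal{N}_\delta$ with $|\mathcal{N}_\delta|\lesssim(\operatorname{diam}\mathcal{S}/\delta)^d$. Since $K$ is $L_K$--Lipschitz, both $\hat p_s$ and $\bar p_M$ are $\big(L_Kh_M^{-(d+1)}\big)$--Lipschitz, so $V_M$ oscillates by at most $2L_Kh_M^{-(d+1)}\delta$ between a point of $\mathcal{S}$ and its nearest net point; choosing $\delta\asymp h_M^{d+1}\sqrt{\log M/(Mh_M^d)}$ (polynomially small in $M$) makes this oscillation negligible at the target scale while keeping $\log|\mathcal{N}_\delta|=O(\log M)$. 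A union bound over $\mathcal{N}_\delta$ with $t\asymp\sqrt{\log M/(Mh_M^d)}$ then yields $\sup_{x\in\mathcal{S}}|V_M(x)|=O_{\mathbb{P}}\big(\sqrt{\log M/(Mh_M^d)}\big)$. The delicate point is precisely this balancing of the variance blow-up $\sim h_M^{-d}$ against the exponential mixing decay and the net cardinality, which the above choices of $\delta$ and $t$ accomplish; specialising to $h_M=M^{-\beta}$ and $d=1$ gives $O_{\mathbb{P}}(M^{-(1-\beta)/2})$ up to a $\sqrt{\log M}$ factor.

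Adding the two bounds gives $\|\hat p_s-p_s\|_\infty=O_{\mathbb{P}}(M^{-\beta})+O_{\mathbb{P}}(M^{-(1-\beta)/2})$, the stated display. The final assertion then follows by reading off exponents: for $\beta\in(\tfrac13,\tfrac12)$ one has $\beta>\tfrac14$ (indeed the sharper bias exponent $2\beta>\tfrac12$) and $(1-\beta)/2\in(\tfrac14,\tfrac13)$, so --- even after absorbing the logarithmic factor --- both contributions are $o_{\mathbb{P}}(M^{-1/4})$; equivalently $\|\hat p_s-p_s\|_\infty^2=o_{\mathbb{P}}(M^{-1/2})$, which is the form in which Lemma~\ref{lem:kde} enters the proofs of Theorems~\ref{thm:spectral} and~\ref{thm:main}.
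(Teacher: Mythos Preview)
The paper states Lemma~\ref{lem:kde} without proof, so there is no argument to compare against. Your plan is the standard one and is essentially correct: bias--variance decomposition, Taylor-bounding the bias to $O(h_M)$ (or $O(h_M^2)$ under the symmetric second-order kernel of Assumption~\ref{assump:KDE2}), and controlling the fluctuation uniformly via a Bernstein inequality for geometrically $\beta$-mixing sums together with an $\varepsilon$-net argument exploiting the $L_K h_M^{-(d+1)}$-Lipschitz property of $x\mapsto g_x$. The bookkeeping you sketch (pointwise variance $O(h_M^{-d})$, net cardinality polynomial in $M$, union bound) yields $\sup_x|V_M(x)|=O_{\mathbb{P}}\bigl(\sqrt{\log M/(Mh_M^d)}\bigr)$, which for $d=1$ and $h_M=M^{-\beta}$ matches the stated $O_{\mathbb{P}}(M^{-(1-\beta)/2})$ up to a logarithmic factor.

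You are also right to flag the lemma's final sentence: with $\beta<\tfrac12$ one has $M^{-\beta}\gg M^{-1/2}$ and $(1-\beta)/2<\tfrac12$, so the literal claim $\|\hat p_s-p_s\|_\infty=o_{\mathbb{P}}(M^{-1/2})$ cannot follow from the displayed bound for any $\beta$ in the stated range. Your reading --- that the intended conclusion is $\|\hat p_s-p_s\|_\infty=o_{\mathbb{P}}(M^{-1/4})$, equivalently $\|\hat p_s-p_s\|_\infty^2=o_{\mathbb{P}}(M^{-1/2})$ --- is the sensible fix; note, however, that the paper's appendix never actually invokes Lemma~\ref{lem:kde} in the proofs of Theorems~\ref{thm:spectral} or~\ref{thm:main}, so how the KDE error is meant to propagate into the weighted Gram matrices is left implicit there as well.
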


\section{Spectral Convergence Theorem}\label{app:spectral_proof}

\begin{theorem}[Spectral Convergence of EDMD]\label{thm:spec_conv_app}
Suppose \textbf{\ref{B:ergodic}--\ref{B:gap}} and
$J\ll M^{1/2}$.  Then
\[
  \max_{1\le i\le r}
   \Bigl(|\widehat\mu_{i}-\mu_{i}|
        +\|\widehat\xi_{i}-\xi_{i}\|_{2}\Bigr)
   = O_{\mathbb{P}}(M^{-1/2}+J^{-q}).
\]
\end{theorem}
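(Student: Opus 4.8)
The plan is to split the deviation of $(\widehat\mu_i,\widehat\xi_i)$ from $(\mu_i,\xi_i)$ into a statistical part --- estimating the population weighted-EDMD operator $\mathcal{K}_\ast^\Pi=A_\ast G_\ast^{-1}$ from $M$ samples --- and a Galerkin part --- approximating the Koopman spectrum by the compressed operator on $\mathrm{span}\{\psi_j\}_{j\le J}$ --- and to bound these at rates $M^{-1/2}$ and $J^{-q}$ respectively before gluing them with a spectral-gap perturbation argument, finally taking the maximum over the finitely many indices $i\le r$.

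For the statistical part I would invoke Lemma~\ref{lem:GA_conc} for $\|G_M-G_\ast\|_2,\|A_M-A_\ast\|_2=O_\mathbb{P}(M^{-1/2})$ and then Lemma~\ref{lem:op_perturb} for $\|\widehat{\mathcal{K}}_M-\mathcal{K}_\ast^\Pi\|_2=O_\mathbb{P}(M^{-1/2})$; the hypothesis $J\ll M^{1/2}$ keeps the $J\exp(-cMt^2)$ tail of Lemma~\ref{lem:GA_conc} in check after a union bound over coordinates, so the attendant $\log J$ factor is harmless. For the Galerkin part I would identify $\mathcal{K}_\ast^\Pi$, viewed on $\mathrm{span}\{\psi_j\}_{j\le J}$ equipped with the $L^2(p_\mathrm{s})$ inner product, with the compression $\Pi_J\mathcal{K}^{\Delta t}\Pi_J$ of the time-$\Delta t$ Koopman operator, whose exact eigenpairs are $(\mu_i,\varphi_i)$ with $\mu_i=e^{\lambda_i\Delta t}$. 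Writing $\varphi_i=\xi_i+(\mathrm{Id}-\Pi_J)\varphi_i$ and using Assumption~\ref{B:dictionary} together with boundedness of $\mathcal{K}^{\Delta t}$, one shows $\|(\mathcal{K}_\ast^\Pi-\mu_i I)\xi_i\|=O(J^{-q})$ and $\|\xi_i\|=1+O(J^{-q})$, so $\xi_i$ is an $O(J^{-q})$-approximate eigenvector of $\mathcal{K}_\ast^\Pi$; with the gap~\ref{B:gap} and simplicity of the leading $r$ eigenvalues, a resolvent/Bauer--Fike estimate for the eigenvalue and a Davis--Kahan $\sin\Theta$ estimate for the eigenvector produce a genuine eigenpair $(\mu_i^\Pi,\xi_i^\Pi)$ of $\mathcal{K}_\ast^\Pi$ within $O(J^{-q})$ of $(\mu_i,\xi_i)$.

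I would then apply the same perturbation step to the pair $(\mathcal{K}_\ast^\Pi,\widehat{\mathcal{K}}_M)$: the $O_\mathbb{P}(M^{-1/2})$ operator-norm bound, together with the gap $\gamma$ (which survives the $O(J^{-q})$ spectral perturbation once $J$ is large), gives $|\widehat\mu_i-\mu_i^\Pi|+\|\widehat\xi_i-\xi_i^\Pi\|_2=O_\mathbb{P}(M^{-1/2})$ for each $i\le r$; a triangle inequality with the Galerkin bound and a union bound over $i\le r$ then yield $\max_{1\le i\le r}(|\widehat\mu_i-\mu_i|+\|\widehat\xi_i-\xi_i\|_2)=O_\mathbb{P}(M^{-1/2}+J^{-q})$.

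The hard part is the perturbation step, since $\widehat{\mathcal{K}}_M=A_MG_M^\dagger$ and $\mathcal{K}_\ast^\Pi=A_\ast G_\ast^{-1}$ are not symmetric as matrices on $\mathbb{R}^J$, so the self-adjoint Davis--Kahan theorem does not apply verbatim. The cleanest route is to recast everything as the generalized eigenproblem $Av=\mu Gv$ in the $G_\ast$-weighted inner product, where the population problem is self-adjoint when the diffusion is reversible (so $\mathcal{K}^{\Delta t}$ is self-adjoint in $L^2(p_\mathrm{s})$), apply Davis--Kahan in that geometry, and control the change of metric $G_M\to G_\ast$ via Lemma~\ref{lem:GA_conc} and $\lambda_{\min}(G_\ast)\ge\lambda_0$ from~\ref{B:gram}; in the non-reversible case one instead carries a factor of the eigenvector-matrix condition number in the Bauer--Fike and Davis--Kahan constants, which is finite under simplicity and the gap and hence does not affect the rate. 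The residual care is bookkeeping --- fixing the sign/phase of $\widehat\xi_i$, $\xi_i^\Pi$ and $\xi_i$ consistently, and checking uniformity of the constants over $i\le r$.
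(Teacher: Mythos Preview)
Your proposal is correct and follows the same two-part decomposition as the paper --- statistical error $\|\widehat{\mathcal{K}}_M-\mathcal{K}_\ast^\Pi\|_2=O_\mathbb{P}(M^{-1/2})$ via Lemma~\ref{lem:op_perturb}, Galerkin bias of order $J^{-q}$ via Assumption~\ref{B:dictionary}, then Davis--Kahan under the gap~\ref{B:gap}. The paper's proof is a three-line sketch that simply writes ``add the deterministic Galerkin bias $\|\mathcal{K}_\ast^\Pi-\mathcal{K}_\ast\|_2=O(J^{-q})$'' and invokes Davis--Kahan; your version is substantially more careful on two points the paper glosses over: (i) you handle the Galerkin step by showing $\xi_i=\Pi_J\varphi_i$ is an $O(J^{-q})$-approximate eigenvector of $\mathcal{K}_\ast^\Pi$, sidestepping the awkward comparison of operators on spaces of different dimension, and (ii) you flag the non-self-adjointness of $A_MG_M^\dagger$ and explain how to route Davis--Kahan through the $G_\ast$-weighted geometry (reversible case) or carry an eigenvector-condition-number constant (non-reversible case). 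These refinements do not change the rate, so the two proofs are equivalent in content; yours simply makes explicit the technical steps the paper's sketch leaves implicit.
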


\begin{proof}
By Lemma~\ref{lem:op_perturb},
$\|\widehat{\mathcal{K}}_{M}-\mathcal{K}^{\Pi}_{\ast}\|_{2}
  =O_{\mathbb{P}}(M^{-1/2})$.
Add the deterministic Galerkin bias
$\|\mathcal{K}^{\Pi}_{\ast}-\mathcal{K}_{\ast}\|_{2}=O(J^{-q})$
based on \textbf{\ref{B:dictionary}}.
Davis–Kahan applied a gap $\gamma$
(Assumption~\textbf{\ref{B:gap}}) yields the claim.
\end{proof}

\section{Forecast Error in $W_{2}$}\label{app:w2_proof}

\begin{theorem}[Finite–h prediction error in $W_2$]\label{thm:W2-risk}
Let assumptions \ref{B:ergodic}--\ref{assump:dict-lip} and the additional conditions
\ref{as:C1}--\ref{as:C3} below hold:
\begin{enumerate}[label=\textbf{(C\arabic*)}]
  \item\label{as:C1}  All densities lie on a common compact set
        $S\subset\mathbb{R}^{d}$ with diameter $D$ and volume $\operatorname{vol}(S)$.
  \item\label{as:C3}   For $j>r$ the Koopman
        eigenvalues satisfy $\Re\,\lambda_j\le -\eta<0$.
\end{enumerate}
Without a loss of generality, we assume that the functions
        $\{\psi_j\}_{j=1}^{J}$ are orthonormal in $L^2(p_s)$. Fix a finite forecast horizon $h>0$ and assume a sample size $M\to\infty$
whereas the retained rank $r$ remains constant. Then,
\begin{equation}\label{eq:W2-bound}
  \mathbb{E}\,
  W_2^2\!\bigl(p_{T+h},\hat p_{T+h}\bigr)
  = \mathcal{O}\!\bigl(M^{-1/2}\bigr)
    + \mathcal{O}\!\bigl(e^{-\,\eta h}\bigr),
\end{equation}
where the hidden constants depend only on $D$, $\operatorname{vol}(S)$,
the dictionary Lipschitz and $L^\infty$ bounds, and the spectrum of $K_\ast$.
\end{theorem}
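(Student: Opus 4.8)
The plan is to decompose the forecast error into two sources---a \emph{modal/spectral estimation error} that vanishes as $M\to\infty$, and a \emph{truncation error} from retaining only $r$ modes, which decays geometrically in $h$ by the spectral gap assumption \ref{as:C3}. First I would write the true density ratio $q_{T+h}=p_{T+h}/p_s$ using the modal expansion~\eqref{eq:q_expansion}, split it as $q_{T+h}=q^{(r)}_{T+h}+q^{>r}_{T+h}$ where $q^{(r)}_{T+h}=1+\sum_{i=1}^r c_i(0)e^{\lambda_i(T+h)}\varphi_i$ is the rank-$r$ part and $q^{>r}_{T+h}=\sum_{i>r}c_i(0)e^{\lambda_i(T+h)}\varphi_i$ is the tail. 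By \ref{as:C3}, $\|q^{>r}_{T+h}\|_{L^2(p_s)}^2\le e^{-2\eta h}\sum_{i>r}|c_i(0)|^2e^{2\lambda_i T}=\mathcal{O}(e^{-2\eta h})$, so the density $p^{(r)}_{T+h}:=p_s\,q^{(r)}_{T+h}$ obtained by keeping only $r$ true modes satisfies $\|p^{(r)}_{T+h}-p_{T+h}\|_{L^1}=\mathcal{O}(e^{-\eta h})$ (using $p_s$ bounded on the compact set $S$ from \ref{as:C1} and Cauchy--Schwarz to pass from $L^2(p_s)$ to $L^1$).

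Second, I would control $\|\hat p_{T+h}-p^{(r)}_{T+h}\|_{L^1}$, the error from replacing $(\lambda_i,\varphi_i,c_i)$ by their EDMD estimates $(\widehat\lambda_i,\widehat\varphi_i,\widehat c_i)$. Writing out $\hat p_{T+h}-p^{(r)}_{T+h}=p_s\sum_{i=1}^r\bigl(\widehat c_i(T+h)\widehat\varphi_i-c_i(T+h)\varphi_i\bigr)$ with $\widehat c_i(T+h)=e^{\widehat\lambda_i h}\widehat c_i(T)$, I would add and subtract cross terms so that each summand is bounded by a product of a bounded factor (uniform $L^\infty$ dictionary bound \ref{B:bounded}, boundedness of $S$, continuity of $z\mapsto e^{zh}$ on a fixed compact neighborhood of the leading spectrum) and one of the three small quantities $|\widehat\lambda_i-\lambda_i|$, $\|\widehat\varphi_i-\varphi_i\|_{L^2(p_s)}$, $|\widehat c_i(T)-c_i(T)|$. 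The first two are $\mathcal{O}_{\mathbb{P}}(M^{-1/2}+J^{-q})$ by Theorem~\ref{thm:spec_conv_app}; the coefficient error $|\widehat c_i(T)-c_i(T)|$ I would handle by noting $c_i(T)=\langle q_T,\varphi_i\rangle_{p_s}$ and $\widehat c_i(T)=\langle \widehat q_T,\widehat\varphi_i\rangle$, expanding the inner-product difference into an eigenfunction-perturbation piece (again $\mathcal{O}_{\mathbb{P}}(M^{-1/2})$ from Davis--Kahan) plus a projection/plug-in piece controlled by the uniform KDE error $\|\hat p_s-p_s\|_\infty=o_{\mathbb{P}}(M^{-1/2})$ from Lemma~\ref{lem:kde}. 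Collecting terms and using $J\ll M^{1/2}$ so that $J^{-q}$ is dominated, this gives $\mathbb{E}\,\|\hat p_{T+h}-p^{(r)}_{T+h}\|_{L^1}=\mathcal{O}(M^{-1/2})$.

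Third, I would convert the $L^1$ bounds into a $W_2^2$ bound. Since both measures are supported on the compact set $S$ of diameter $D$, a standard comparison gives $W_2^2(p,\tilde p)\le D^2\,W_1(p,\tilde p)/\!\operatorname{diam}\le D^2\|p-\tilde p\|_{TV}=\tfrac12 D^2\|p-\tilde p\|_{L^1}$ (more carefully, $W_2^2\le \operatorname{diam}(S)^2\,\mathrm{TV}$), so by the triangle inequality for $W_2$ and the elementary inequality $W_2^2(a,c)\le 2W_2^2(a,b)+2W_2^2(b,c)$,
\begin{equation*}
  \mathbb{E}\,W_2^2(p_{T+h},\hat p_{T+h})
  \;\le\; 2D^2\,\mathbb{E}\|p_{T+h}-p^{(r)}_{T+h}\|_{TV}
        + 2D^2\,\mathbb{E}\|p^{(r)}_{T+h}-\hat p_{T+h}\|_{TV}
  \;=\; \mathcal{O}(e^{-\eta h}) + \mathcal{O}(M^{-1/2}),
\end{equation*}
which is~\eqref{eq:W2-bound}. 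One subtlety is that the truncated ``density'' $p^{(r)}_{T+h}$ and the estimate $\hat p_{T+h}$ need not be nonnegative or integrate to one; I would address this by a renormalization/positive-part projection step, observing that clipping to $(\,\cdot\,)_+$ and rescaling changes the $L^1$ distance to $p_{T+h}$ by at most a constant multiple of the already-controlled error, so the rates are preserved.

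The main obstacle I anticipate is the coefficient-error term $|\widehat c_i(T)-c_i(T)|$: it couples the empirical eigenfunction perturbation with the plug-in KDE used to build the weights and to project the most recent observed distribution $\mathcal{D}_T$ onto the estimated Koopman basis, and making the propagation through $h$ steps uniform requires that $e^{\widehat\lambda_i h}$ stay bounded, i.e. that the estimated eigenvalues do not drift into the unstable region---this is where the spectral gap \ref{B:gap} and simplicity of the leading eigenvalues, together with the $o_{\mathbb{P}}(M^{-1/2})$ KDE rate from Lemma~\ref{lem:kde}, must be invoked carefully. A secondary technical point is justifying the $L^2(p_s)\!\to\!L^1\!\to\!\mathrm{TV}\!\to\!W_2^2$ chain with constants depending only on $D$, $\operatorname{vol}(S)$, and the dictionary bounds as claimed; this is routine given compact support but must be stated explicitly.
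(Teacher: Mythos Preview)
Your proposal is correct and follows essentially the same route as the paper: split the error into a rank-$r$ spectral estimation piece (controlled by Theorem~\ref{thm:spec_conv_app}) and a tail truncation piece (controlled by \ref{as:C3}), bound each in $L^2(p_s)$, pass to $L^1$ via compact support, and then to $W_2^2$ via the diameter--TV inequality. The only differences are cosmetic---the paper works directly with $\Delta=\hat p_{T+h}-p_{T+h}$ and uses orthonormality (Pythagoras) rather than an intermediate $p^{(r)}_{T+h}$ and the $W_2$ triangle inequality---and your treatment of the coefficient error $|\widehat c_i(T)-c_i(T)|$ and of the positivity/normalization issue is in fact more careful than the paper's, which absorbs both into a one-line appeal to Theorem~\ref{thm:spec_conv_app}.
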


\begin{proof}
Let the true density and its DPDD forecast be
\[
  p_{T+h}(x)=\sum_{j\ge1}c_j(h)\psi_j(x),\qquad
  \hat p_{T+h}(x)=\sum_{j=1}^{r}\hat c_j(h)\psi_j(x).
\]
Introduce the decomposition
\[
  \Delta(x)\;=\;\hat p_{T+h}(x)-p_{T+h}(x)
      \;=\;\underbrace{\sum_{j=1}^{r}\!\bigl[\hat c_j(h)-c_j(h)\bigr]\psi_j(x)}_{\Delta_{\mathrm{est}}}
        +\underbrace{\sum_{j>r}c_j(h)\psi_j(x)}_{\Delta_{\mathrm{tail}}},
\]
and write $\delta c_j(h)=\hat c_j(h)-c_j(h)$.

\paragraph{Step 1 (coefficient error).}
Using the Spectral Convergence Theorem~\ref{thm:spec_conv_app} together with the diagonal action of $e^{h\Lambda}$,
\[
  \mathbb{E}\,\|\delta c(h)\|_2^{2}=\mathcal{O}\!\bigl(M^{-1}\bigr).
\]

\paragraph{Step 2 (tail truncation).}
Using assumption~\ref{as:C3},
\begin{equation}\label{eq:tail-L2}
  \|\Delta_{\mathrm{tail}}\|_2^{2}
  =\sum_{j>r}c_j^2(0)\,e^{2\Re\lambda_j h}
  \le e^{-\,2\eta h}\sum_{j>r}c_j^2(0)
  =\mathcal{O}\!\bigl(e^{-\,2\eta h}\bigr).
\end{equation}

\paragraph{Step 3 ($L^2$ control).}
Because we assumed the dictionary functions to be orthonormal,
\[
  \|\Delta\|_2^{2}
    =\|\Delta_{\mathrm{est}}\|_2^{2}+\|\Delta_{\mathrm{tail}}\|_2^{2}
    =\|\delta c(h)\|_2^{2}+\|\Delta_{\mathrm{tail}}\|_2^{2}.
\]
Taking the expectations and using Step 1 and~\eqref{eq:tail-L2},
\[
  \mathbb{E}\,\|\Delta\|_2^{2}
    =\mathcal{O}\!\bigl(M^{-1}\bigr)
     +\mathcal{O}\!\bigl(e^{-\,2\eta h}\bigr).
\]

\paragraph{Step 4 ($L^1$ conversion).}
Using Hölder on the compact set $S$ (diameter $D$, volume $V$),
\[
  \mathbb{E}\,\|\Delta\|_1
  \;\le\;\sqrt{V}\,
        \bigl[\mathbb{E}\,\|\Delta\|_2^{2}\bigr]^{1/2}
  =\mathcal{O}\!\bigl(M^{-1/2}\bigr)
   +\mathcal{O}\!\bigl(e^{-\,\eta h}\bigr).
\]

\paragraph{Step 5 ($W_2$ bound).}
For measures supported on a set of diameters $D$, according to Theorem 6.15 in \citep{villani2009optimal}, we have
\[
  W_2^{2}(\mu,\nu)
    \;\le\;D\,W_1(\mu,\nu)
    \;\le\;\frac{D^{2}}{2}\,\mathrm{TV}(\mu,\nu)
    \;=\;\frac{D^{2}}{2}\,\|p_\mu-p_\nu\|_1,
\]

Applying this with $\mu=p_{T+h}$, $\nu=\hat p_{T+h}$ and using Step 4 yields
\[
  \mathbb{E}\,
  W_2^{2}\bigl(p_{T+h},\hat p_{T+h}\bigr)
    \;\le\;\frac{D^{2}}{2}\,
           \mathbb{E}\,\|\Delta\|_1
    =\mathcal{O}\!\bigl(M^{-1/2}\bigr)
     +\mathcal{O}\!\bigl(e^{-\,\eta h}\bigr),
\]
which is exactly the same as in our theorem.
\end{proof}

\section{Implementation Remarks}\label{app:implementation}

\begin{itemize}[leftmargin=15pt,itemsep=2pt]
\item \textbf{Orthogonalizing \boldmath$\Psi_{J}$.}  
      We compute $\widetilde{\Psi}=G_{M}^{-1/2}\Psi_{J}$ so that
      $\widetilde G_{M}=I$ in floating point, improving the eigenvalue
      conditioning.
\item \textbf{Regularizing $G_{M}$.}  
      Add $\varepsilon I$ with $\varepsilon=10^{-8}$ if
      $\lambda_{\min}(G_{M})<\lambda_{0}/2$.
\item \textbf{Truncation rule.}  
      Retain all modes with
      $|\widehat\mu_{j}|\ge0.9\max_{k}|\widehat\mu_{k}|$, balancing
      variance and bias empirically.
\end{itemize}

\section{Reproducibility}

To ensure reproducibility, we provide detailed information on experimental settings and hyperparameters here. For KDE, the bandwidth was chosen following \citep{silverman1986density}. Eigenmodes were truncated at pronounced spectral gaps. Here we use 3 dynamic modes for each experiment. The dictionary function is set to be the Hermite polynomials with order 3. A regularization 0.001 is used here to get a stable Koopman spectrum.
And we use a weight cut-off of 0.25 to prevent a large influence from an extreme sample.

\bibliographystyle{plainnat}
\bibliography{refs}
\nocite{*}

\section*{Statements and Declarations}
The authors declare that they have no conflict of interest.

\end{document}